\title{Completely Reachable Almost Group Automata}
\author{David Fernando Casas Torres}
\date{}
\newtheorem{proposition}{Proposition}
\newtheorem{lemma}{Lemma}
\newtheorem{corollary}{Corollary}
\newtheorem{definition}{Definition}
\newtheorem{theorem}{Theorem}
\newtheorem*{no-theorem}{Theorem}
\newcommand{\aut}[1]{\mathcal{#1}}
\newcommand{\nset}[1]{\{#1\}}
\newcommand{\kblock}[2]{{#1}^{[#2]}}
\newcommand{\coll}[1]{\mathrm{coll}(#1)}
\newcommand{\excl}[1]{\mathrm{excl}(#1)}
\newcommand{\dupl}[1]{\mathrm{dupl}(#1)}
\newcommand{\scc}{strongly connected component}
\newcommand{\compreach}{completely reachable}
\newcommand{\blim}{block of imprimitivity}
\newcommand{\blsim}{blocks of imprimitivity}
\newcommand{\gamgr}[2]{\Gamma_{#1}(#2)}
\newcommand{\leaf}[1]{\mathrm{leaf}(#1)}
\newcommand{\Sigmaast}{\Sigma^\ast}
\newcommand{\St}[2]{\mathrm{St}_{#1}(#2)}
\newcommand{\cor}[1]{\mathrm{Cr}(#1)}
\begin{document}
\maketitle
\begin{abstract}
We consider finite deterministic automata such that their alphabets consist of exactly one letter of defect 1 and a set of permutations of the state set.
We study under which conditions such an automaton is completely reachable. We focus our attention on the case when the set of permutations generates a transitive imprimitive group.  

\end{abstract}

\section{Introduction}

A deterministic finite automaton is said to be \emph{completely reachable} if every non-empty subset of states is the image of the whole state set by the action of some word.
 Such automata appeared in the study of descriptional complexity of formal languages~\cite{maslennikova2014resetcomplexity} and in relation to the \v{C}ern\'{y} conjecture~\cite{don2016vcerny}.
 A systematic study of completely reachable automata was initiated in~ \cite{bondar2016completely} and \cite{bondar2018characterization}, and continued in \cite{bondar2023completely}; in these papers Bondar and Volkov developed a characterization of completely reachable automata that relied on the construction of a series of digraphs.

One of the main results by Ferens and Szyku\l a  \cite{ferens2022completely} was an algorithm of polynomial time complexity, with respect to the number of states and letters, to decide whether a given automaton was completely reachable.
This seemed to solve the complexity problem for this kind of automata.
A different approach was proposed by Volkov and the author \cite{casas2022binary} for the special case of automata with two letters.
There the characterization relied on whether one of the letters acted as a complete cyclic permutation of the states and how the other letter acted on subsets of states representing subgroups of the cyclic group.

In this paper we give an approach to the generalization of the result in \cite{casas2022binary} by allowing that all the letters except one act as permutations of the set state and studying how the additional non-permutation letter acts on non-trivial blocks of imprimitivity if there are any.

The study of automata where all letters but one are permutations is by no means new.
This kind of automata is presented with different names.
In \cite{ananichev2019new}, they are called \textit{almost-permutation automata} and are used to present an example of a series of slowly synchronzing automata with a sink state.
In \cite{andre2004near}, automata are under the disguise of transformation semigroups and are called \textit{near permutation}.
In \cite{berlinkov2020synchronizing}, they are called \textit{almost-group automata}; there it is proved that these automata synchronize with high probability.
Finally, in \cite{rystsov2023primitive}, the non-permutation letters are the identity except in a subset of states where they have the same image.
There, it is proved that if no equivalence relation is preserved under the action of the letters, then the automaton is synchronizing.
Among these papers, we would like to highlight the work done in \cite{hoffman2023primitive} where the primitivity of a group of permutations of a state set has been tightly related to the complete reachability of the automata generated by adding a non permutation letter. Thus, the result presented here approaches this theory from the other side where the group is transitive but not primitive and we suggest a condition to ensure that the automaton generated is completely reachable.

In Section 2 we present the definitions and notation used in this paper. Then we proceed with the main results in Section 3.

\section{Preliminaries}

A \textit{deterministic finite automaton}, or simply an automaton, is usually defined as a triple $\aut{A} = \langle Q, \Sigma, \delta \rangle$, where $Q$, the states, and $\Sigma$, the alphabet, are finite sets and $\delta : Q \times \Sigma \rightarrow Q$ is the transition function.
For each letter in alphabet of the automaton $a \in \Sigma$ we can define the function $\delta_a : Q \rightarrow Q$ where $\delta_a(q) = \delta(q,a)$, hence each letter can be considered individually as a function of $Q$ on itself or a \emph{transformation} over $Q$.
This observation makes reasonable for us to use the following notation: for every $q \in Q$ and $a \in \Sigma$ we will denote $\delta(q,a) := q\cdot a $.
Derived from this we can say that an automaton can be specified just with its set of states and the action of each letter in this set; that is why from now on we will define automata as pairs of the state set and the alphabet.

A \emph{word}  of an automaton is a finite sequence of letters over its alphabet; this includes the empty word.
 The set of all words over the alphabet $\Sigma$ is denoted by $\Sigmaast$.
 We can extend the action of letters to words recursively in the following way: if $w \in \Sigmaast$, $a \in \Sigma$ and $q \in Q$, then $q \cdot wa := (q\cdot w) \cdot a$, and the action of the empty word is the identity function.
 Furthermore, the action of words can be applied to subsets of states: if $P \subseteq Q$ and $w \in \Sigmaast$, then $P \cdot w:= \nset{p \cdot w \mid \text{ for every } p \in P}$.

A subset of states $P \subseteq Q$ is called \emph{reachable} if there is a word $w \in \Sigmaast$ such that its image is exactly $P$, that is, $Q \cdot w = P$.
 An automaton is said to be \emph{synchronizing} if at least one singleton is reachable, i.e., there is a state $q \in Q$ and a word $w \in \Sigmaast$ such that $Q \cdot w = \nset{q}$.
 An automaton is \emph{completely reachable} if every non-empty subset of states is reachable.


Let $\aut{A} = \langle Q, \Sigma \rangle$ be an automaton and $w \in \Sigmaast$ an arbitrary word.
The excluded set of $w$ denoted by $\excl{w}$ is the set of states that have no preimages by $w$.
 The \emph{defect} of $w$ is the size of its excluded set.
 In the case the defect is 0, the word  $w$ represents a permutation of the set of states.
 Since a word is a total function, if the defect of $w$ is bigger than 0, then there must be states with the same image.
 These images are the \emph{duplicated} states of the word; the set of these states will be denoted by $\dupl{w}$. 
When any of these sets, $\excl{}$ or $\dupl{}$, is a singleton we will make no distinction between the set and the state inside it.
 Additionally, for the case of words of defect 1 we know that exactly two states must have the same image; we will call this pair of states the \emph{collapsed set} of the word, denoted by $\coll{}$.

Some transformations over a set of states $Q$ can be bijective, thus permutations.
Thanks to this we can use some terminology of the theory of permutation groups.
Recall that the set of all the bijective transformations of a finite set $Q$ on itself is denoted by $S_Q$, also called the \emph{symmetric group}.
Let $G \subseteq S_Q$ be a group of permutations of $Q$.
This group is said to be \emph{transitive} if  for every pair of states $q,p \in Q$ there is a permutation $\sigma \in G$ such that $p\cdot \sigma = q$.
A non-empty subset $B \subseteq Q$ is said to be a \emph{block} of the group if and only if for every $\sigma \in G$ either $B \cdot \sigma = B$ or $B \cdot \sigma \cap B = \emptyset$.
The singletons and $Q$ itself are, always, blocks, these are called \emph{trivial}.
A permutation group $G \subseteq S_Q$ is said to be \emph{primitive} if it is transitive and the only blocks are the trivial ones; otherwise the group is said to be \emph{imprimitive}.
In this article when we talk about a \blim, unless stated the contrary, it will always be non-trivial.
If a transitive group $G \subset S_Q$ has a \blim\ $B\subseteq Q$, all the images of $B$ by elements of $G$ are also blocks of imprimitivity and form a partition of $Q$.
This partition of subsets is called a \emph{system of imprimitivity} of the group $G$ over $Q$. 

A \emph{directed graph} $\Gamma$ is a pair $(V(\Gamma),E(\Gamma))$, where $V(\Gamma)$ is the vertex set and $E(\Gamma) \subseteq V(\Gamma) \times V(\Gamma)$ is the set of directed edges.
To each edge we can assign one or more labels from some set $L$.
We will denote an edge $(s,t) \in E(\Gamma)$ labelled with $w$ as $s \xrightarrow{w} t$.
Since in this paper we consider only directed graphs, from now we omit the word ``directed''.
For reference, the first and second components of an edge will be called the \emph{source} and \emph{tail} respectively.
A \emph{path} of a graph is a set of edges $e_1, e_2, \dots, e_m$, $m\geq 1$, such that for every $1 \leq i < m$, the tail of $e_i$ is the same as the source of $e_{i+1}$. 
Vertices $p,q \in V(\Gamma)$ are \emph{ strongly connected} if there is a path from $p$ to $q$ and from $q$ to $p$.
Also we consider each vertex as strongly connected with itself.
A \scc\ of a graph is a maximal subset of vertices such that all its vertices are strongly connected to each other.

An automaton can be represented as a labelled graph, where the vertex set is the states set and for each state $p$ and letter $a$, there is a labelled edge $p \xrightarrow{a} p\cdot a$.
This is the \emph{underlying} graph of the automaton.

As we mentioned in the introduction, in \cite{hoffman2023primitive} the following characterization of primitive permutation groups is given.
Here $[n] := \nset{1, 2 \dots, n}$ and if $S$ is a set of transformations, then $\langle S \rangle$ is the transformation semigroup generated by $S$.
Recall that a transformation $f$ is \emph{idempotent} if $f^2 = f$. 

\begin{theorem}[\cite{hoffman2023primitive}[Theorem 3.1]]
Let $G$ be a permutation group on $[n]$ with $n \geq 3$.
Then $G$ is primitive if and only if for each [idempotent] transformation $f: [n] \rightarrow [n]$ of \emph{defect 1} every non-empty subset $A \subseteq [n]$ is reachable in $\langle G \cup \nset{f} \rangle$.
\end{theorem}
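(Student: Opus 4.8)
The plan is to establish the two implications separately. For ``$G$ not primitive $\Rightarrow$ not completely reachable'': since $G$ is not primitive, $[n]$ carries a $G$-invariant partition $\mathcal{P}=\{P_1,\dots,P_m\}$ into $m\ge 2$ non-empty parts — a non-trivial block system when $G$ is transitive, and the orbit partition otherwise — and, unless $G=\{\mathrm{id}\}$, we may arrange that $|P_1|\ge 2$ (if $G$ is trivial, then for any idempotent $f$ of defect $1$ we have $\langle G\cup\{f\}\rangle=\{\mathrm{id},f\}$, which reaches only $[n]$ and $[n]\setminus\excl{f}$, already too few when $n\ge 3$). Choosing distinct $b,b'\in P_1$ and letting $f$ fix every state except $b$, which it maps to $b'$, yields an idempotent transformation of defect $1$. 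I would then track the quantity $\mu(X):=\#\{i : X\cap P_i\ne\emptyset\}$: each $g\in G$ permutes the parts, so $\mu(X\cdot g)=\mu(X)$, and $f$ alters $X$ only inside $P_1$ while ``$X\cdot f$ meets $P_1$'' is equivalent to ``$X$ meets $P_1$'', so $\mu(X\cdot f)=\mu(X)$. Hence $\mu([n]\cdot w)=m$ for every word $w$, whereas $\mu(P_1)=1<m$; thus $P_1$ is not reachable.

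For the converse, assume $G$ is primitive. Since $f$ is idempotent of defect $1$, it fixes every state of $[n]$ except one state $b$, with $b\cdot f=b'\ne b$. I would prove, by downward induction on $|A|$, that every non-empty $A\subseteq[n]$ is reachable; the base case $A=[n]$ is the image of the empty word. For the inductive step let $\emptyset\ne A\subsetneq[n]$ have size $j\le n-1$, and assume all subsets of size greater than $j$ are reachable. The crucial claim is that there exists $g\in G$ with $b'\cdot g\in A$ and $b\cdot g\notin A$. Granting it, put $S:=(A\cdot g^{-1})\cup\{b\}$; since $b\cdot g\notin A$ we have $b\notin A\cdot g^{-1}$, so $|S|=j+1$ and $S$ is reachable by the inductive hypothesis, while $b'\cdot g\in A$ gives $b'\in A\cdot g^{-1}\subseteq S$, whence $S\cdot f=A\cdot g^{-1}$ and therefore $S\cdot f g=A$. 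So $A$ is reachable, completing the induction.

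It remains to prove the claim, and this is where primitivity is genuinely used. I would consider the relation $x\preceq y$ defined by: for all $g\in G$, $x\cdot g\in A$ implies $y\cdot g\in A$. This is a preorder and is $G$-invariant (replace $g$ by $hg$). Its symmetric part is a $G$-invariant equivalence relation, hence trivial by primitivity; it cannot be the universal relation, as that would force $A$ to be $\emptyset$ or $[n]$, so $\preceq$ is actually a $G$-invariant partial order. But a transitive group admits no non-trivial invariant partial order: if $x\prec y$ then, taking $g$ with $x\cdot g=y$, invariance yields $y\prec y\cdot g\prec y\cdot g^2\prec\cdots$, an infinite strictly ascending chain, impossible in the finite set $[n]$. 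Hence $\preceq$ is equality, and since $b\ne b'$ we get $b'\not\preceq b$, which is precisely the claim.

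The main obstacle is exactly this inductive step of the second direction, together with pinning down where primitivity — as opposed to mere transitivity — enters: the first part shows that for imprimitive $G$ the statement fails, so the argument must break precisely when the symmetric part of $\preceq$ is a non-trivial congruence. I expect the fussy points to be the $G$-invariance check for $\preceq$ and the careful treatment, in the first direction, of the intransitive and trivial cases so that a bona fide idempotent defect-$1$ letter is produced each time. An alternative, perhaps more transparent, route to the claim is to invoke the classical fact that for a primitive group every non-diagonal orbital digraph is strongly connected: applying this to the orbital of the pair $(b',b)$, strong connectivity produces a directed edge from $A$ to its complement, and such an edge is exactly a pair $(b'\cdot g,\,b\cdot g)$ with $b'\cdot g\in A$ and $b\cdot g\notin A$.
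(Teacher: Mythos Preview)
Your argument is correct in both directions. The preorder $\preceq$ is indeed $G$-invariant, its symmetrisation is a $G$-congruence, and the chain argument ruling out a non-trivial invariant partial order is clean; the orbital-digraph alternative you mention is the standard packaging of exactly the same fact. In the ``not primitive'' direction your $\mu$-invariant works, and your case split (trivial $G$; intransitive non-trivial $G$ via the orbit partition; transitive imprimitive $G$ via a non-trivial block system) covers everything, with the idempotent $f$ you build having defect~$1$ and satisfying $f^2=f$.

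However, note that the paper does \emph{not} prove this theorem: it is quoted verbatim from \cite{hoffman2023primitive} and used as motivation. The closest the paper comes to an in-house argument is the remark, after Lemma~2 and Theorem~2, that when $G$ is primitive the set $C_e^{[1]}$---being a block with at least two elements---must equal $Q$, whence $\Gamma_1(\aut{A})$ is strongly connected and the Bondar--Volkov criterion (Theorem~2) yields complete reachability. For the converse direction the paper's Proposition~1 and its Corollary show that if some block containing $\excl{a}$ is $a$-invariant then its complement is unreachable; your construction of $f$ (collapsing two points inside a block $P_1$) produces exactly such an $a$-invariant block, so the paper's tools would finish that direction too.

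So your proof and the route implicit in the paper's machinery differ genuinely. You argue directly and elementarily: a combinatorial invariant ($\mu$) for one direction, and an invariant-partial-order / orbital argument for the key step of the downward induction in the other. The paper's implicit route goes through the Rystsov graph $\Gamma_1$ and the Bondar--Volkov characterisation, which is heavier but situates the result inside the general framework the paper is developing. Your approach is more self-contained and makes the role of primitivity (trivial congruences, connected orbitals) completely explicit; the paper's approach explains \emph{why} the imprimitive case is the interesting one for the rest of the article.
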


This theorem presents a characterization of primitive groups.
In the language of automata it states that in the presence of a set of permutation letters that generates a primitive group, the addition of any transformation of defect 1 suffices to obtain a completely reachable automaton.
Here we study a related case.
We would like to know what happens when the group generated by the permutation letters is transitive but not primitive.
We will see how this situation is not that forgiving and it requires a more complex relation between the group generated by the permutations and the transformation of defect 1.
The results proved in this article are closely related to the ones presented  in \cite{casas2022binary} for automata with just one permutation letter and one with defect 1.
The work presented in this article and in \cite{casas2022binary} is based on the work made in \cite{bondar2018characterization}; there the main actor is a graph constructed in several steps.
We will briefly explain the construction in Section \ref{sec:Rystsov graphs}.

For the rest of this paper we will consider automata $\aut{A} = \langle Q, \Sigma \rangle$, where $\Sigma = \Sigma_0 \cup \nset{a}$ and:
\begin{itemize}
\item The set of letters $\Sigma_0 \subset S_Q$, are all permutations of $Q$.
\item The generated subgroup $G = \langle \Sigma_0 \rangle$ is transitive.
\item The letter $a$ has defect 1.
\end{itemize}
The excluded state of the only letter of defect 1 will be denoted by $e$, i.e., $\excl{a} = e $.
Unless specified otherwise, the group generated by all permutation letters is denoted as $G$.
We will call automata with these characteristics \emph{almost group} automata.

Let $r \in \coll{a}$ be one of the two states collapsed by $a$.
There is a permutation that sends $e$ to $r$; call it $\sigma \in G$.
The transformation $\sigma a$ has defect 1, $e = \excl{\sigma a}$, and $e \in \coll{\sigma a}$.
Consider the automaton $\overline{\aut{A}} = \langle Q, \Sigma_0 \cup\nset{\sigma a} \rangle$.
Note that $\aut{A}$ is completely reachable if and only if $\overline{\aut{A}}$ is completely reachable.
Therefore, there is no loss of generality when we add the condition that $e \in \coll{a}$ from the beginning.
When this happens we call the automaton \emph{standardized}.
This change will simplify the arguments we use for the rest of the article.


For any automaton a subset of states,  $P \subseteq Q$ is \emph{invariant} by a transformation $w \in \Sigmaast$, or $w$-\emph{invariant}, if $P\cdot w \subseteq P$. 
The condition to get complete reachability in the binary case is that no subset of states that \textit{represents} a subgroup of the cyclic group is invariant by letter of defect 1.
The cosets of any subgroup generate a partition of the group that contains this subgroup.
There is a parallel situation in the case of \blsim, they form a partition of the set of states.
There are subgroups for each of these \blsim\  that let them invariant\footnote{Considerations of this are treated ahead in the paper.}.
This is the main reason to consider systems of imprimitivity. 
These partitions of the set of states are the closest to represent subgroups of the group acting on the states.

\section{A necessary condition} 
First, we begin by proving that complete reachability implies that some blocks can not be $a$-invariant.

\begin{proposition}
Let $\aut{A}$ be a standardized almost group automaton.
If $\aut{A}$ is \compreach , then $G$ is transitive and if there is at least one \blim\, then no \blim\ that contains $e = \excl{a}$ is invariant by $a$.
\label{Comp-Reach implies a-invariance}
\end{proposition}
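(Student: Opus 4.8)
The plan is to prove the statement by contraposition: assuming that some (necessarily non-trivial) block $B$ with $e \in B$ satisfies $B\cdot a \subseteq B$, I will exhibit a non-empty subset of $Q$ that is not reachable, namely the complement $Q\setminus B$. The transitivity of $G$ is already part of the standing definition of an almost group automaton, so that half of the statement needs no argument; I note, however, that the same idea applied to $G$-orbits instead of blocks shows directly that if $G$ were intransitive, then for any orbit $O$ with $e\notin O$ the set $Q\setminus O$ would be unreachable (since $G$-orbits are $G$-invariant, not merely permuted), so complete reachability does force transitivity. I therefore concentrate on the block assertion, and I assume there is at least one non-trivial block $B$, which I embed in its imprimitivity system $\mathcal B = \{B\cdot g : g\in G\}$, a partition of $Q$.

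The key preliminary step is to observe that $a$-invariance of a block through $e$ is very rigid. Write $\coll{a}=\{e,r\}$ (the automaton being standardized) and recall $\excl{a}=e$. Since $e\notin Q\cdot a$ and $B\cdot a\subseteq B$, we get $B\cdot a\subseteq B\setminus\{e\}$, hence $|B\cdot a|\le |B|-1$, so $a$ is not injective on $B$; this forces $r\in B$, i.e.\ $\coll{a}\subseteq B$. Consequently $a$ is injective on $Q\setminus B$, so $|(Q\setminus B)\cdot a| = |Q\setminus B|$, while $|B\cdot a| = |B|-1$; comparing with $|Q\cdot a| = |Q|-1$ shows the two images $B\cdot a$ and $(Q\setminus B)\cdot a$ are disjoint. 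Therefore $(Q\setminus B)\cdot a = Q\setminus B$ and $a^{-1}(B)=B$; in words, $a$ permutes $Q\setminus B$ and collapses $B$ onto $B\setminus\{e\}$. This is the step I expect to be the main obstacle: one has to notice that the hypothesis pins the collapsed pair inside $B$ and turns $a$ into a genuine permutation on the complement; everything afterward is bookkeeping.

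With this in hand I would run a minimal-counterexample argument. Suppose, for contradiction, that $Q\setminus B'$ is reachable for some block $B'\in\mathcal B$, and pick a shortest word $w$ with $Q\cdot w = Q\setminus B'$ over all such pairs $(w,B')$; then $w$ is non-empty. If the last letter of $w$ is a permutation $\sigma\in\Sigma_0$, then $Q\cdot w' = Q\setminus(B'\cdot\sigma^{-1})$ with $B'\cdot\sigma^{-1}\in\mathcal B$, contradicting minimality. So the last letter is $a$; putting $P = Q\cdot w'$ we get $P\cdot a = Q\setminus B'$. As $e\notin Q\cdot a$ we must have $e\in B'$, hence $B'=B$. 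Now $(P\cdot a)\cap B = (P\cap B)\cdot a$ is empty, so $P\cap B=\emptyset$; and for every other block $B_i\in\mathcal B$, the equality $(P\cdot a)\cap B_i = B_i$ together with injectivity of $a$ on $Q\setminus B$ forces $a^{-1}(B_i)\subseteq P$, whence $P\supseteq \bigcup_i a^{-1}(B_i) = a^{-1}(Q\setminus B) = Q\setminus B$. Combining, $P = Q\setminus B = Q\setminus B'$, contradicting minimality.

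Hence no set of the form $Q\setminus B'$ with $B'\in\mathcal B$ is reachable; in particular $Q\setminus B$, which is non-empty because $B$ is a proper subset of $Q$, is not reachable, so $\aut{A}$ is not \compreach. This contradicts the hypothesis, completing the proof.
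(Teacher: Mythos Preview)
Your proof is correct and follows essentially the same route as the paper: assume a block $B\ni e$ is $a$-invariant, take a shortest word reaching the complement of some block in the system, and rule out both possible last letters to obtain a contradiction. The only real difference is that you spell out explicitly why $\coll{a}\subseteq B$ and hence why $a$ restricts to a permutation of $Q\setminus B$, whereas the paper asserts the $a$-invariance of $\overline{B}$ more tersely; the concluding minimality contradiction is identical in substance.
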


\begin{proof}
First, let us prove that the condition for the group generated by the set of permutations to be transitive is necessary. For every word $w \in \Sigmaast$ it is true that $e \in \excl{wa}$; furthermore,
 $$\vert \excl{w} \vert \leq \vert \excl{w a} \vert \leq \vert \excl{w} \vert + 1.$$
This is, the action of adding $a$ to a word either increases by one or keeps the defect of the resulting word.
Note that adding a permutation does not modify the defect of any word.
Hence, in order to reach the subsets $Q \setminus \nset{q}$ for every $q \in Q$, it is necessary that there exists a permutation $\sigma_q \in G$ such that $e \cdot \sigma_q = q$. Let $p,q \in Q$ be an arbitrary pair of states. By the previously said, if $\aut{A}$ is completely reachable, then there are two permutations $\sigma_p,\sigma_q \in G$ such that $e \cdot \sigma_p = p$ and $e \cdot \sigma_q = q$. Finally, note that $p \cdot \sigma_p^{-1}\sigma_q = q$. Thus, $G$ is transitive.

For a subset of states $S \subset Q$, we denote by $\overline{S}$ its complement, i.e., $Q \setminus S$.

The proof that no \blim\ is $a$-invariant will be by contradiction. Suppose that $B$ is a \blim\ that contains $e$ and is $a$-invariant. This block belongs to a system of imprimitivity. Let $w \in \Sigmaast$ be the shortest word that reaches the complement of a block of this system, say $\overline{{C}}$. If $w = w^\prime \, b$ with $b \in \Sigma_0$, then $Q \cdot w^\prime = \overline{{C}}\cdot b^{-1}$, the complement of a \blim. This contradicts the condition of $w$ being the shortest word. Hence, $w$ can not end in a permutation.

As a consequence, the word $w$ ends with the letter $a$, i.e., $w = w^\prime \, a$. Recall that $Q \cdot w^\prime a \subset Q\cdot a$ and $e \notin Q\cdot a$, thus $e \notin Q\cdot w$ and we conclude that $Q \cdot w = \overline{B}$.

Since $B$ is $a$-invariant, we can conclude that its complement is also $a$-invariant. And since every $q \in \overline{B}$ has a preimage by $a$ then this letter acts as a permutation of $\overline{B}$. Therefore $Q\cdot w^\prime = Q\cdot w^\prime a = \overline{B}$ what, again, contradicts the supposition of $w$ being the shortest. There is no other type of letter in which the word $w$ could finish, then we end with an absurd. This situation came from supposing that $B$ is $a$-invariant, thus we have our proposition.
\end{proof}

By the preceding proof we have:
\begin{corollary}
If there is a \blim\ that contains $e$ and is invariant by $a$, then its complement is not reachable.
\end{corollary}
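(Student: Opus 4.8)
The plan is to reuse, in a self-contained way, the chain of deductions already carried out inside the proof of Proposition~\ref{Comp-Reach implies a-invariance}; indeed, the corollary is essentially the observation that that chain never used \emph{complete} reachability of $\aut{A}$, only the reachability of the single set $\overline{B}$. So I would argue by contradiction: assume $B$ is a \blim\ with $e\in B$ and $B\cdot a\subseteq B$, and suppose that $\overline{B}$ is reachable. Since $B$ belongs to an \imsys, the family of words $w$ for which $Q\cdot w$ is the complement of \emph{some} block of that system is non-empty (a word reaching $\overline{B}$ is one such), and I would fix a shortest such word $w$, say $Q\cdot w=\overline{C}$ for a block $C$ of the system.

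The first step is to show $w$ cannot end in a permutation letter: if $w=w'b$ with $b\in\Sigma_0$, then $Q\cdot w'=\overline{C}\cdot b^{-1}=\overline{C\cdot b^{-1}}$, and $C\cdot b^{-1}$ is again a block of the same system, so $w'$ would be a strictly shorter word reaching the complement of a block, contradicting minimality. Hence $w=w'a$, so $Q\cdot w\subseteq Q\cdot a=Q\setminus\{e\}$; thus $e\notin\overline{C}$, i.e.\ $e\in C$, and since the blocks of the system are pairwise disjoint while $e\in B$, necessarily $C=B$ and $Q\cdot w=\overline{B}$. This is the one extra bit of bookkeeping the corollary needs beyond a literal transcription of the proposition's proof: the minimal word is forced to land on $\overline{B}$ itself.

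The remaining step is exactly the part already done inside the proof of Proposition~\ref{Comp-Reach implies a-invariance}: the $a$-invariance of $B$, together with $\excl{a}=e\in B$ and the standardization $e\in\coll{a}$, forces $a$ to restrict to a permutation of $\overline{B}$, whence also $a^{-1}(\overline{B})=\overline{B}$. From $(Q\cdot w')\cdot a=\overline{B}$ I then obtain $Q\cdot w'\subseteq a^{-1}((Q\cdot w')\cdot a)=a^{-1}(\overline{B})=\overline{B}$, and since $a$ acts bijectively on $\overline{B}$ this forces $Q\cdot w'=\overline{B}$ — a strictly shorter word reaching $\overline{B}$, contradicting the minimality of $w$. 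Therefore $\overline{B}$ is not reachable. I do not expect any genuine obstacle here: all the substance is already contained in the proof above, and the only point requiring a little care is making sure the minimal-length word is pinned down to $\overline{B}$ precisely.
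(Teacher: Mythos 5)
Your argument is correct and is exactly how the paper intends the corollary to be read: the paper derives it "by the preceding proof," i.e., by observing that the proof of Proposition~\ref{Comp-Reach implies a-invariance} only uses the reachability of the complement of a block of the system, and your proposal simply makes that extraction explicit (including the small bookkeeping step that the minimal word must land on $\overline{B}$ itself). No gaps; same approach as the paper.
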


\section{Rystsov graphs of almost group automata}
\label{sec:Rystsov graphs}
\subsection{The structure of Rystsov graphs}
In \cite{bondar2018characterization} and \cite{bondar2023completely} Bondar and Volkov presented a characterization of completely reachable automata.
The characterization relies on the construction of a graph that can be constructed from an arbitrary automaton.  
This graph is a generalization of the ideas presented by Igor Rystsov in \cite{rystsov2000idempotents}.
That is why we will call these graphs as \emph{Rystsov graphs} of the automata.

The Rystsov graph of an automaton $\aut{A}$, denoted by $\Gamma(\aut{A})$, is constructed in an inductive way.
This means that in order to construct $\Gamma(\aut{A})$ we first assemble a graph called $\Gamma_1(\aut{A})$, verify if we can continue and if that is the case from $\Gamma_1(\aut{A})$ we continue with the construction of $\Gamma_2(\aut{A})$ and so on.
This series of graphs is guaranteed to always finish, the final graph is the Rystsov graph of $\aut{A}$. 
For the construction of the partial graphs we make use of the sets of words $W_k(\aut{A}) \subset \Sigmaast$, defined as the subset of all words of defect $k$ for $k \geq 1$.

The first step is to construct the graph $\gamgr{1}{\aut{A}}$ where its vertex set is $Q_1 := Q$ and its edge set is defined as:
$$ E_1(\aut{A})  := \nset{\excl{w} \xrightarrow{w} \dupl{w} \in Q_1 \times Q_1 \mid w \in W_1(\aut{A})}.$$
\textsl{Example}

Consider the automaton $\aut{E}_{18} := \langle \nset{1, 2, \dots, 18}, \nset{a,b,c} \rangle$, where $b$ and $c$ are permutations with the following cyclic representation:
\begin{align*}
b :=& (1,11,13,5,7,17)(2,10,14,4,8,16)(3,12,15,6,9,18)\\
c:= & (1,3,2)(4,5,6)(7,13)(8,16)(9,15)(10,14)(11,17)(12,18),
\end{align*}
and the transformation $a$ has defect 1. The following representation of $a$ puts the respective image under each state and omits the states that do not change:
$$
\left( {1\; 2\; 5\; 6\;  8} \atop {6\; 8\;  6\; 5\; 2} \right).
$$
Note that the $\excl{a} = 1$, $\dupl{a} = 6$ and $\coll{a} = \nset{1,5}$. The group generated by $\nset{b,c}$ is transitive and the blocks of imprimitivity that contain the state $1$ are the sets
$$\nset{1,5}, \nset{1,2,3,4,5,6}.$$
We can see $1 \xrightarrow{a} 6$, $2 \xrightarrow{ac^2} 5$ and $1 \xrightarrow{ab^3a} 3$ are edges in $E_1(\aut{E}_{18})$.\\

We continue the definitions. For any automaton $\aut{A}$ consider the following subset of states:
$$
D_1(\aut{A}) := \nset{p \in Q_1 \mid p = \dupl{w}\quad \&\quad e = \excl{w} \text{ for } w \in W_1(\aut{A})}.
$$

These are the states directly connected to $e$ in $\gamgr{1}{\aut{A}}$, that is, tails of the edges with $e$ as source. The following lemma states that all the edges in $\gamgr{1}{\aut{A}}$ are images by $G$ of these initial edges.
 
\begin{lemma}
If $q \to p \in E(\gamgr{1}{\aut{A}})$, then there are $\sigma_q \in G$ and $d \in D_1(\aut{A})$ such that $e\cdot \sigma_q = q$ and $d \cdot \sigma_q = p$. Or, what is equivalent, there is a permutation $\sigma_q \in G$ such that $p\cdot \sigma_q^{-1} \in D_1(\aut{A})$.
\label{edges gamma 1}
\end{lemma}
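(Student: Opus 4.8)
The plan is to reduce an arbitrary defect-$1$ word that realizes the edge to the single letter $a$ followed by a permutation. First I would observe that any $w \in W_1(\aut{A})$ with $\excl{w} = q$ and $\dupl{w} = p$ must contain at least one occurrence of $a$, because a product of letters from $\Sigma_0$ is a permutation and hence has defect $0$. Writing $v \in \Sigma_0^\ast$ for the suffix of $w$ that follows its \emph{last} occurrence of $a$, we obtain a factorization $w = u\,a\,v$ in which $v$ acts on $Q$ as a permutation $\sigma_q := v \in G$.

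The next step is to show that the prefix $u a$ is itself a defect-$1$ word with excluded state $e$. Since $v$ is a permutation, appending it changes neither the cardinality of the image nor, up to relabelling by $v$, the combinatorial type of the transformation; in particular $\mathrm{defect}(u a) = \mathrm{defect}(w) = 1$, so $u a \in W_1(\aut{A})$. On the other hand $\excl{ua} \supseteq \excl{a} = \nset{e}$ because $(Q\cdot u)\cdot a \subseteq Q\cdot a$, and together with $|\excl{ua}| = 1$ this forces $\excl{ua} = \nset{e}$. Hence, setting $d := \dupl{ua}$, the word $u a$ itself witnesses that $d \in D_1(\aut{A})$, by the very definition of $D_1$.

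Finally I would transport the data along $v$. For any word $W$ and any permutation $\pi$ of $Q$ one has $\excl{W\pi} = \excl{W}\cdot\pi$ and $\dupl{W\pi} = \dupl{W}\cdot\pi$, simply because $\pi$ is a bijection of $Q$. Applying this with $W = u a$ and $\pi = v = \sigma_q$ gives $q = \excl{w} = e\cdot\sigma_q$ and $p = \dupl{w} = d\cdot\sigma_q$, which is exactly the assertion; the equivalent reformulation follows by acting with $\sigma_q^{-1}$ on the second equality, namely $p\cdot\sigma_q^{-1} = d \in D_1(\aut{A})$.

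I do not expect a genuine obstacle here: the whole argument rests on two elementary facts, that right-multiplication by a permutation preserves the defect of a word, and that it commutes — via that permutation — with the operators $\excl{}$ and $\dupl{}$. The only point that needs a moment's care is the equality $\excl{ua} = \nset{e}$, as opposed to the weaker $e \in \excl{ua}$: one first gets the inclusion for free and then upgrades it to equality using that $u a$ has defect exactly $1$, which is why it was important to peel off \emph{all} the trailing permutation letters so that $u a$ really is a defect-$1$ prefix of $w$.
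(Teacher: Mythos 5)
Your proof is correct, but it follows a genuinely different route from the paper's proof of this lemma. The paper does not touch the internal structure of the witnessing word $w$ at all: it invokes the transitivity of $G$ to pick some $\sigma_q \in G$ with $e \cdot \sigma_q = q$, and then observes that $w\sigma_q^{-1}$ still has defect $1$ with $\excl{w\sigma_q^{-1}} = e$ and $\dupl{w\sigma_q^{-1}} = p\cdot\sigma_q^{-1} \in D_1(\aut{A})$. You instead factor $w = u\,a\,v$ at the last occurrence of $a$, show that the prefix $ua$ already has defect $1$ with excluded state $e$, and transport along the permutation suffix $v =: \sigma_q$. Your argument buys something: it never uses transitivity of $G$, and it produces a concrete $\sigma_q$, namely the suffix of $w$ itself; in fact it is exactly the decomposition technique the paper employs later in the proof of Lemma~\ref{edges gamma k}, so your version makes the $k=1$ case uniform with the general one. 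The paper's version is shorter because transitivity is a standing hypothesis for almost group automata, so the existence of $\sigma_q$ is free. Two minor points in your write-up deserve a word: the suffix $v$ lies in $\Sigma_0^\ast$, so its action belongs to $G$ because in a finite setting the semigroup generated by the permutations $\Sigma_0$ coincides with the group $\langle \Sigma_0\rangle$ (and if $v$ is empty, $\sigma_q$ is the identity, which is also fine); and your key inclusion $e \in \excl{ua}$, upgraded to equality via $\vert\excl{ua}\vert = 1$, is exactly right and is the step that justifies $\dupl{ua} \in D_1(\aut{A})$ by the definition of $D_1$.
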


\begin{proof}
If $q \to p$ is an edge of $\gamgr{1}{\aut{A}}$, then there is a word of defect 1, call it $w \in W_1(\aut{A})$ such that $\excl w = q$ and $\dupl w = p$; this happens due to the definition of $\gamgr{1}{\aut{A}}$.
Remember that $G$ is transitive, thus there is a permutation $\sigma_q \in G$ such that $e \cdot \sigma_q = q$.

The word $w\sigma_q^{-1}$ has defect 1 and $\excl{w \sigma_q^{-1}} = \excl{w} \cdot \sigma_q^{-1}$, at the same time $\dupl{w\sigma_q^{-1}} = \dupl{w}\cdot \sigma_q^{-1}$.
Thus $\excl{w\sigma_q^{-1}} = q \cdot \sigma^{-1} = e$ and $p \cdot \sigma_q^{-1} \in D_1(\aut{A})$.$\qed$
\end{proof}

This lemma also tell us that in order to compute $\gamgr{1}{\aut{A}}$ it is sufficient to calculate $D_1(\aut{A})$, and then apply to the generated edges permutations that send $e$ to each of the different states of the automaton. In our running example the initial edges of $\gamgr{1}{\aut{E}_{18}}$ are shown in Figure \ref{Initial edges}.

\begin{figure}[h]
\centering
\begin{tikzpicture}
	\node[circle, draw] at($(0,0)$) (1) {$1$};
	\node[circle, draw] at($(2, 1)$) (6) {$6$};
	\node[circle, draw] at($(2, 0)$) (5) {$5$};
	\node[circle, draw] at($(2,-1)$) (3) {$3$};
	
	\draw
	(1) edge[-latex, above] node[sloped]{$a$} (6)
	(1) edge[-latex, above] node[sloped]{$a^2$} (5)
	(1) edge[-latex, above] node[sloped]{$ab^3a$} (3)
	;
\end{tikzpicture}

\caption{The initial edges of $\gamgr{1}{\aut{E}_{18}}$.}
\label{Initial edges}
\end{figure}

And the \scc\ that contains $1$ is shown in Figure \ref{strongly connected} (we omitted the labels to avoid confusion).

\begin{figure}[h]
\begin{center}
\begin{tikzpicture}
	\foreach \i in {1,2,3,4,5,6}
	{
		\node[fill=white, circle, draw, scale=0.75] at ($({60*(\i -1) }:2cm) + (-3,0)$) (\i) {$\i$};
	}
	
	\draw
	(1) edge[-latex] (6)
	(1) edge[-latex] (5)
	(1) edge[-latex] (3)
	
	(2) edge[-latex] (5)
	(2) edge[-latex] (4)
	(2) edge[-latex] (1)
	
	(3) edge[-latex] (4)
	(3) edge[-latex] (6)
	(3) edge[-latex] (2)
	
	(4) edge[-latex] (3)
	(4) edge[-latex] (2)
	(4) edge[-latex] (5)
	
	(5) edge[-latex] (3)
	(5) edge[-latex] (1)
	(5) edge[-latex] (6)
	
	(6) edge[-latex] (1)
	(6) edge[-latex] (3)
	(6) edge[-latex] (4)
	;
\end{tikzpicture}

\caption{A \scc\ of $\gamgr{1}{\aut{E}_{18}}$.}
\label{strongly connected}
\end{center}

\end{figure}

Now, let $\kblock{C_e}{1} \subseteq Q_1$ be the vertex set of the \scc\ of $\gamgr{1}{\aut{A}}$ that contains $e$.

\begin{lemma}
The set $C_e^{[1]}$ is a \blim.
\label{strongly connected is block 1}
\end{lemma}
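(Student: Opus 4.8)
The plan is to exhibit $G$ as a group of automorphisms of the graph $\gamgr{1}{\aut{A}}$ and then to use the elementary fact that a graph automorphism carries each \scc\ onto a \scc. Granting these two ingredients, the conclusion is immediate: for every $\sigma \in G$ the image $\kblock{C_e}{1}\cdot\sigma$ is the vertex set of some \scc\ of $\gamgr{1}{\aut{A}}$, and since two distinct strongly connected components of a graph are disjoint, $\kblock{C_e}{1}\cdot\sigma$ is either equal to $\kblock{C_e}{1}$ or disjoint from it. That is exactly the defining property of a block.

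The step carrying the actual content is the claim that $G$ acts on $\gamgr{1}{\aut{A}}$ by automorphisms, and here I would argue straight from the definition of the edge set rather than through Lemma \ref{edges gamma 1}. Let $q \to p$ be an edge of $\gamgr{1}{\aut{A}}$, witnessed by a word $w \in W_1(\aut{A})$ with $\excl{w} = q$ and $\dupl{w} = p$, and let $\sigma \in G$. Choose any word over $\Sigma_0$ representing $\sigma$ and append it to $w$: the word $w\sigma$ still has defect $1$, and because $\sigma$ is a permutation one checks immediately that $\excl{w\sigma} = \excl{w}\cdot\sigma = q\cdot\sigma$ and $\dupl{w\sigma} = \dupl{w}\cdot\sigma = p\cdot\sigma$. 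Hence $q\cdot\sigma \to p\cdot\sigma$ is again an edge of $\gamgr{1}{\aut{A}}$. Running the same argument with $\sigma^{-1}$ shows that right-translation by $\sigma$ is a bijection of the vertex set that preserves edges in both directions, i.e.\ a graph automorphism; so $\sigma \mapsto (\,\cdot\,\sigma)$ embeds $G$ into $\mathrm{Aut}(\gamgr{1}{\aut{A}})$.

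It then remains to recall two routine facts about strongly connected components: an automorphism $\phi$ and its inverse both preserve paths, so $\phi$ sends a \scc\ to a \scc; and if two strongly connected components meet, their union is strongly connected, forcing them to coincide by maximality. Combining these with the previous paragraph completes the argument. I do not expect a genuine obstacle; the only point needing care is the side on which $G$ acts — one must append the permutation word on the right, since this translates $\excl{}$ and $\dupl{}$ by the same permutation, whereas prepending one leaves both unchanged and so would give only the trivial action. (Transitivity of $G$ is not used in the block property itself; via Proposition \ref{Comp-Reach implies a-invariance} and the standing hypotheses it is what guarantees that $\kblock{C_e}{1}$ lies inside a genuine system of imprimitivity.)
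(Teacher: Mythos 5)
Your core step --- right translation by each $\sigma \in G$ is a bijection of $Q_1$ that maps edges to edges because $\excl{w\sigma} = \excl{w}\cdot\sigma$ and $\dupl{w\sigma} = \dupl{w}\cdot\sigma$, hence carries each \scc\ onto a \scc, and distinct strongly connected components are disjoint --- is correct, and it is essentially the paper's own mechanism, merely packaged through $\mathrm{Aut}(\gamgr{1}{\aut{A}})$ instead of transporting individual paths by $\rho$ and then running the intersection argument by hand. So the block property is fine as you argue it.

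What you omit is the other half of the paper's proof: showing that $\kblock{C_e}{1}$ is not a singleton. The paper's standing convention is that a \blim\ is non-trivial unless stated otherwise, and the lemma is later used with exactly that force: after Theorem \ref{bondar2018} the paper concludes that if $G$ is primitive then $\gamgr{1}{\aut{A}}$ is strongly connected, which only follows because $\kblock{C_e}{1}$, being a block with at least two elements, must then be all of $Q$. Your argument would be equally valid if $\kblock{C_e}{1}$ were $\nset{e}$, in which case the lemma (read with the paper's convention) would assert nothing useful. The missing step is where transitivity and standardization actually enter, contrary to your closing parenthesis: choose $\sigma \in G$ with $e\cdot\sigma = \dupl{a} = d$; then $e \xrightarrow{a} d$ is an edge, and the edges $d \xrightarrow{a\sigma} d\cdot\sigma$, $d\cdot\sigma \xrightarrow{a\sigma^2} d\cdot\sigma^2$, \dots\ close a cycle back to $e$ since $d\cdot\sigma^{i} = e$ for some $i \geq 1$; as $d \in Q\cdot a$ while $e \notin Q\cdot a$, we get two distinct states $e, d \in \kblock{C_e}{1}$. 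Adding this observation (and, as the paper does, noting that the case $\kblock{C_e}{1} = Q$ is acceptable outright) completes your proof.
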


\begin{proof}
If $\gamgr{1}{\aut{A}}$ is strongly connected then $C_e^{[1]} = Q$ and the proposition is true.
Then let us assume $\Gamma_1(\aut{A})$ is not strongly connected and $C_e^{[1]}$ is a proper subset of $Q$.

Let $\sigma \in G$ be a permutation such that $e \cdot \sigma = \dupl{a} = d$, then the edge $d \xrightarrow{a \sigma} d\cdot \sigma \in E_1$. If we repeat the application of $\sigma$, the produced words have all defect 1, and there is an $i \geq 1$ such that $d \cdot \sigma^i = e$. Then, $C_e^{[1]}$ is not a singleton since at least $e, d \in C_e^{[1]}$.

Considering this  we will prove first that for any $\rho \in G$, the subset $\kblock{C_e}{1}\cdot \rho$ is also a \scc.

Let $p,q \in \kblock{C_e}{1}$ be two arbitrary states. In $\gamgr{1}{\aut{A}}$ there is a path:
$$
p \xrightarrow{w_1} t_1 \xrightarrow{w_2} t_2 \dots t_{k-1} \xrightarrow{w_k} q
$$
where every $w_i$ is a word of defect 1. Since permutations act well on excluded and duplicated states, then:
$$
p \cdot \rho \xrightarrow{w_1 \rho} t_1 \cdot \rho \xrightarrow{w_2 \rho} t_2 \cdot \rho \dots t_{k-1}\cdot \rho \xrightarrow{w_k \rho} q\cdot \rho
$$
is a path in $\kblock{C_e}{1} \cdot \rho$; in the same way we can prove the existence of a path connecting $q\cdot \rho$ with $p \cdot \rho$, making $\kblock{C_e}{1} \cdot \rho$  a \scc.

What is left is to prove that $\kblock{C_e}{1}$ and its images by permutations of $G$ are blocks of imprimitivity.
Let $\rho \in G$ be an arbitrary permutation.
Suppose $x \in \kblock{C_e}{1} \cap \kblock{C_e}{1} \cdot \rho$ and let $y \in \kblock{C_e}{1}$ and $z \in \kblock{C_e}{1}\cdot \rho$ be two different states.
By the definition of \scc\ there are paths from  $y$ to  $x$, from $x$ to $z$ and, going back, from $z$ to $x$, and from $x$ to $y$.
Then $$\kblock{C_e}{1} = \kblock{C_e}{1} \cdot \rho.$$
This makes $\kblock{C_e}{1}$ a \blim.
\end{proof}

We continue the inductive construction of the graph $\gamgr{}{\aut{A}}$. Once we get $\gamgr{k}{\aut{A}}$, $k \geq 1$, if one of the following alternatives happens then the construction will be stopped and $\gamgr{}{\aut{A}} := \gamgr{k}{\aut{A}}$: either the graph is strongly connected, or all the \scc s are not \textit{big enough} (we will address the meaning of this in a moment). If none of these two possibilities happen, then we  proceed to construct $\gamgr{k+1}{\aut{A}}$. The new vertex set $Q_{k+1}$ will consists of the \scc s of $\gamgr{k}{\aut{A}}$; thus, each vertex is a collection of vertices of the set $Q_k$.

 In order to define the edges of this new graph, we need to properly define when a \scc\ is \textit{big enough}. For this note that  each vertex of $\gamgr{2}{\aut{A}}$ is a subset of states (even considering singletons) and the vertices of $\gamgr{3}{\aut{A}}$ would be collections of subsets of states and so on. With this in mind for $k\geq 2$ let $V \in Q_k$ be a vertex of $\gamgr{k}{\aut{A}}$, define the \emph{foliage} of $V$, and denote it by $\leaf{V}$, as follows: for $V \in Q_2$,  its foliage is the set itself, i.e., $\leaf{V} := V$, and for $k > 2$,
 $$\leaf{V} := \bigcup_{x \in V} \leaf{x} .$$

At the end $\leaf{V}$ is a subset of states. A vertex $V$ of $\gamgr{k+1}{\aut{A}}$, or, what is the same, a \scc\ of $\gamgr{k}{\aut{A}}$, is \textit{big enough} if $\vert \leaf{V} \vert \geq k+1$.
Thus, we stop the construction if none of the would be vertices of $\Gamma_{k+1}{\aut{A}}$ have more of $k+1$ states in their foliages.
(The term ``foliage'' is borrowed from \cite{bondar2023completely}, where the definition of the vertex sets of the graphs $\Gamma$ takes form of a rooted tree.)
Suppose that this is not the case and we can continue the process, then we can define a new set of edges: 
\begin{multline*}
E_{k+1} := \nset{C \xrightarrow{w} D \in Q_{k+1} \times Q_{k+1} \mid  C \neq D,\; \text{there is a } w \in W_{k+1}(\aut{A}),\\ \excl{w} \subseteq \leaf{C},\; \dupl{w} \cap \leaf{D} \neq \emptyset}.
\end{multline*}

The edge set of $\gamgr{k+1}{\aut{A}}$ will be the edges of $\gamgr{k}{\aut{A}}$ that connect different vertices of $Q_{k+1}$ together with the set $E_{k+1}$. For a more detailed discussion of the construction of $\gamgr{}{\aut{A}}$ we recommend the reader \cite[Section 3]{bondar2023completely}.  

We have the following theorem that characterizes completely reachable automata:

\begin{theorem}{\cite{bondar2018characterization}}
If an automaton $\aut{A} = \langle Q, \Sigma, \rangle$ is such that the graph $\gamgr{}{\aut{A}}$ is strongly connected and $\gamgr{}{\aut{A}} = \gamgr{k}{\aut{A}}$, then $\aut{A}$ is completely reachable; more precisely, for every non-empty subset $P \subseteq Q$, there is a product $w$ of words of defect at most $k$ such that $P = Q\cdot w$.
\label{bondar2018}
\end{theorem}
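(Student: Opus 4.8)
The plan is to reconstruct the inductive argument of Bondar and Volkov~\cite{bondar2018characterization,bondar2023completely} as a double induction: an outer induction on the co-size $|Q|-|P|$ of the target set $P$, and, inside its inductive step, an induction on the level $j$ that converts the connectivity of the graphs $\gamgr{j}{\aut{A}}$ into statements about reachable subsets. For the outer layer, the case $|Q|-|P|=0$ is handled by the empty word and the case $|Q|=1$ is trivial, so I would assume $|Q|\geq 2$ and, as induction hypothesis, that every subset of $Q$ of co-size strictly less than $d\geq 1$ is reachable by a product of words of defect at most $k$; the goal is then to reach an arbitrary subset $P$ of co-size $d$. I would reduce this to producing a subset $S$ with $|S|>|P|$ that is reachable by such a product, together with a word $v$ — itself a product of words of defect at most $k$ — with $S\cdot v=P$: granting this, the outer induction hypothesis supplies a product reaching $S$, and concatenating $v$ completes the step. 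So everything comes down to a statement of the form ``$P$ is one controlled collapse away from a reachable set of smaller co-size'', which is exactly what the graph $\gamgr{}{\aut{A}}$ is designed to certify.

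The heart of the argument is a propagation lemma proved by induction on $j$ for $1\leq j\leq k$, of the following shape: if two vertices $C,D$ of $\gamgr{j}{\aut{A}}$ lie in a common \scc\ of $\gamgr{j}{\aut{A}}$, then from any reachable set ``aligned with'' $C$ — informally, the complement of $\leaf{C}$ together with possibly one further deleted state, the precise invariant being part of what must be pinned down — one reaches, by appending a product of words of defect at most $k$, a reachable set aligned with $D$. For $j=1$ this is read off the definition of $E_1$: a path $q_0\to q_1\to\dots\to q_m$ in $\gamgr{1}{\aut{A}}$ yields words $w_0,\dots,w_{m-1}$ of defect $1$ with $\excl{w_i}=q_i$ and $\dupl{w_i}=q_{i+1}$ (this is the content of the edge relation; compare Lemma~\ref{edges gamma 1}), and applying $w_i$ to a suitable reachable set either moves the single hole from $q_i$ to $q_{i+1}$ or consumes the duplicate $q_{i+1}$ to delete one more state; iterating along the path realises the desired collapse. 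This already settles the base co-size $d=1$, since $\gamgr{}{\aut{A}}$ being strongly connected with $|Q|\geq 2$ forces every state to occur as $\excl{w}$ for some $w\in W_1(\aut{A})$, whence every $Q\setminus\{q\}$ is reachable. For the step $j\to j+1$, an edge of $\gamgr{j+1}{\aut{A}}$ is either inherited from $\gamgr{j}{\aut{A}}$, in which case the level-$j$ hypothesis applied inside the relevant \scc s does the work, or it is a new edge of $E_{j+1}$, witnessed by a word $w\in W_{j+1}(\aut{A})$ with $\excl{w}\subseteq\leaf{C}$ and $\dupl{w}\cap\leaf{D}\neq\emptyset$; in the latter case one first uses the level-$j$ hypothesis to manoeuvre the current reachable set into a configuration on which $w$ deletes exactly the states of $\excl{w}$ that it should and drops its duplicate inside $\leaf{D}$, and then applies $w$, of defect $j+1\leq k$.

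I expect the main obstacle to be precisely this ``alignment'' inside the step $j\to j+1$: one has to be certain that the auxiliary lower-defect words furnished by the induction hypothesis can bring the reachable set into the right position for $w$ without disturbing states outside $\leaf{C}$ in an uncontrolled way, and that the resulting product yields \emph{exactly} $P$ rather than something merely comparable to it. The required rigidity comes from the \scc\ structure of the graphs $\gamgr{j}{\aut{A}}$ — the foliages behave like closed sub-configurations, the same phenomenon that in the almost group setting makes $\kblock{C_e}{1}$ a \blim\ (Lemma~\ref{strongly connected is block 1}) — and calibrating the ``aligned with $C$'' invariant so that it is at once preserved by these moves and weak enough to be attainable is the delicate point. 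The remaining work is routine assembly: because the construction stopped at level $k$, that is $\gamgr{}{\aut{A}}=\gamgr{k}{\aut{A}}$, no factor of defect larger than $k$ is ever needed; because $\gamgr{}{\aut{A}}$ is strongly connected, any two vertices of $\gamgr{k}{\aut{A}}$ share an \scc, so the propagation lemma at level $k$, unwound through the levels $k,k-1,\dots,1$ (at level $1$ a foliage is a single state), lets us send the hole to any prescribed state and hence realise any target collapse; feeding this into the reduction of the first paragraph and into the outer induction on co-size shows that every non-empty $P\subseteq Q$ equals $Q\cdot w$ for some product $w$ of words of defect at most $k$. I would also treat separately the degenerate cases where an \scc\ is too small to become a vertex of the next graph in the series.
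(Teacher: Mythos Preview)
The paper does not give its own proof of this theorem: it is stated with the citation \cite{bondar2018characterization} and used as a black box, so there is nothing in the present paper to compare your argument against. Your sketch is a faithful outline of the original Bondar--Volkov argument (induction on the co-size of $P$, with the inductive step driven by the edge structure of the graphs $\gamgr{j}{\aut{A}}$), which is exactly what the citation points to.
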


In the case that the group $G$ is primitive over $Q$, from Lemma \ref{strongly connected is block 1} we can see that $\gamgr{1}{\aut{A}}$ will be strongly connected and by Theorem \ref{bondar2018} it immediately follows that $\aut{A}$ is \compreach . That is why, from now on the group $G$ will, besides being transitive, have at least a \blim .\\ \\
\textsl{Continuation of the example}

Recall the automaton $\aut{E}_{18}$. We have seen that $C_e^{[1]} = \nset{1,2,3,4,5,6}$, and the other \scc s are the sets $B_2 =\nset{7,8,9,10,11,12}$ and $B_3 =  \nset{13,14,15,16,17,18}$. Since these sets have more than two elements, we can continue the construction of $\gamgr{}{\aut{E}_{18}}$. Accordingly to the previously said, the vertex set of $\gamgr{2}{\aut{E}_{18}}$ is  $Q_2 = \nset{C_e^{[1]}, B_2, B_3}$.  Consider the word $w := ab^3aca$, note that $\excl{w} = \nset{1,3}$ and $\dupl{w} = \nset{8,6}$, hence the edge $C_e^{[1]} \xrightarrow{w} B_2 \in E_2$. If we add $b$ twice more we have:
\begin{align*}
\excl{wb} = \nset{11,12},\; & \dupl{wb} = \nset{9,16}\\
\excl{wbb} = \nset{13,15},\; & \dupl{wbb} = \nset{18,2}.
\end{align*}

Thus adding the edges $B_2 \xrightarrow{wb} B_3$ and $B_3 \xrightarrow{wbb} C_e^{[1]}$ to $E_2$. These are enough to conclude, thanks to Theorem \ref{bondar2018}, that $\aut{E}_{18}$ is completely reachable.\\

We will extend the results given by Lemma \ref{edges gamma 1} and Lemma \ref{strongly connected is block 1}. Following the previous notation denote the \scc\ that contains $e$ in the graph $\gamgr{k}{\aut{A}}$ as $C_e^{[k]}$.

\begin{lemma}
If the foliages of the vertices in $\gamgr{k}{\aut{A}}$ form a system of imprimitivity of $G$ over $Q$, then $Y \rightarrow Z \in E_{k+1}$ if and only if there is a permutation  $\sigma \in G$ and a set $X \in Q_{k}$ such that
$\leaf{Y} = \leaf{\kblock{C_e}{k}} \cdot \sigma$; $\leaf{X} \cdot \sigma = \leaf{Z}$ and $\kblock{C_e}{k} \rightarrow X \in E_{k+1}$. 
\label{edges gamma k}
\end{lemma}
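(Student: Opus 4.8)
The plan is to lift the transport‑by‑permutation argument of Lemma~\ref{edges gamma 1} one level up in the construction. Throughout I will use the elementary facts already exploited in the proofs of Lemma~\ref{edges gamma 1} and Lemma~\ref{strongly connected is block 1}: right multiplication of a word by a permutation $\rho\in G$ preserves the defect and satisfies $\excl{w\rho}=\excl{w}\cdot\rho$ and $\dupl{w\rho}=\dupl{w}\cdot\rho$; consequently every $\rho\in G$ acts on $\gamgr{k}{\aut{A}}$ as a graph automorphism, so it permutes the \scc s of $\gamgr{k}{\aut{A}}$. Combining this with the hypothesis — which makes $G$ transitive on the blocks $\leaf{x}$, $x\in Q_k$, hence on $Q_k$, hence on the \scc s of $\gamgr{k}{\aut{A}}$ — one obtains, exactly as in Lemma~\ref{strongly connected is block 1}, that the foliages of the vertices of $\gamgr{k+1}{\aut{A}}$ again form a system of imprimitivity. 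In particular these foliages are pairwise disjoint and non‑empty, and all lie in the single $G$-orbit of blocks containing $\leaf{\kblock{C_e}{k}}$. This preliminary observation is the only point where the hypothesis is really used.

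Granting it, the ``if'' direction is immediate. Suppose $\sigma\in G$ and $X$ are as in the statement, the edge $\kblock{C_e}{k}\to X$ of $E_{k+1}$ being witnessed by some $w\in W_{k+1}(\aut{A})$ with $\excl{w}\subseteq\leaf{\kblock{C_e}{k}}$ and $\dupl{w}\cap\leaf{X}\neq\emptyset$. Then $w\sigma\in W_{k+1}(\aut{A})$, and $\excl{w\sigma}=\excl{w}\cdot\sigma\subseteq\leaf{\kblock{C_e}{k}}\cdot\sigma=\leaf{Y}$, while $\dupl{w\sigma}\cap\leaf{Z}=(\dupl{w}\cap\leaf{X})\cdot\sigma\neq\emptyset$. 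Since $\kblock{C_e}{k}\neq X$, their foliages are disjoint, hence so are $\leaf{Y}=\leaf{\kblock{C_e}{k}}\cdot\sigma$ and $\leaf{Z}=\leaf{X}\cdot\sigma$; in particular $Y\neq Z$, so $w\sigma$ witnesses $Y\to Z\in E_{k+1}$.

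For the ``only if'' direction, let $w\in W_{k+1}(\aut{A})$ witness $Y\to Z\in E_{k+1}$, so $\excl{w}\subseteq\leaf{Y}$, $\dupl{w}\cap\leaf{Z}\neq\emptyset$ and $Y\neq Z$. Since $\leaf{Y}$ and $\leaf{\kblock{C_e}{k}}$ lie in the same $G$-orbit of blocks, choose $\sigma\in G$ with $\leaf{\kblock{C_e}{k}}\cdot\sigma=\leaf{Y}$. Then $\leaf{Z}\cdot\sigma^{-1}$ is again a block of this system, hence equals $\leaf{X}$ for a unique vertex $X$ of $\gamgr{k+1}{\aut{A}}$. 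Now $w\sigma^{-1}\in W_{k+1}(\aut{A})$, with $\excl{w\sigma^{-1}}=\excl{w}\cdot\sigma^{-1}\subseteq\leaf{Y}\cdot\sigma^{-1}=\leaf{\kblock{C_e}{k}}$ and $\dupl{w\sigma^{-1}}\cap\leaf{X}=(\dupl{w}\cap\leaf{Z})\cdot\sigma^{-1}\neq\emptyset$; moreover $\leaf{Y}\cap\leaf{Z}=\emptyset$ gives $\leaf{\kblock{C_e}{k}}\cap\leaf{X}=\emptyset$, so $\kblock{C_e}{k}\neq X$. Hence $w\sigma^{-1}$ witnesses $\kblock{C_e}{k}\to X\in E_{k+1}$, and by construction $\leaf{Y}=\leaf{\kblock{C_e}{k}}\cdot\sigma$ and $\leaf{X}\cdot\sigma=\leaf{Z}$, as required.

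I expect essentially all the genuine work to be in the first paragraph, i.e.\ in establishing the level-$k$ analogue of Lemma~\ref{strongly connected is block 1} — that under the hypothesis the foliages of the vertices of $\gamgr{k+1}{\aut{A}}$ again form a system of imprimitivity. This is exactly what guarantees both that the conjugating permutation $\sigma$ exists and that the set $\leaf{Z}\cdot\sigma^{-1}$ is realized as the foliage of an actual vertex $X$. Once that is in hand, both implications reduce to the bookkeeping of how $\excl{\cdot}$, $\dupl{\cdot}$ and the defect transform under right multiplication by an element of $G$, precisely as in Lemma~\ref{edges gamma 1}.
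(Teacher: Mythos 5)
Your argument is correct, but in the ``only if'' direction it takes a genuinely different route from the paper. The paper's proof never invokes transitivity of $G$ on the blocks to produce $\sigma$; instead it extracts $\sigma$ directly from the witnessing word, factoring $w = u\,a\,\sigma$ with $\sigma\in G$ the maximal permutation suffix after the last occurrence of $a$. Since the automaton is standardized, the truncated word $ua = w\sigma^{-1}$ ends in $a$, so $e\in\excl{ua}$, and this is what forces $\excl{ua}\subseteq\leaf{Y}\cdot\sigma^{-1}$ to lie in the block containing $e$, identifying $\leaf{Y}\cdot\sigma^{-1}$ with $\leaf{\kblock{C_e}{k}}$ and $\leaf{Z}\cdot\sigma^{-1}$ with $\leaf{X}$. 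That trick is slightly more constructive (the permutation is read off from $w$, and the transported witness can be taken to end in $a$, a fact exploited later via the sets $D_k(\aut{A})$), whereas your choice of $\sigma$ is abstract and does not use standardization at all. Your route instead leans on the preliminary claim that the foliages of the \scc s of $\gamgr{k}{\aut{A}}$ (the vertices of $\gamgr{k+1}{\aut{A}}$) again form a system of imprimitivity, obtained by the Lemma~\ref{strongly connected is block 1}-style orbit argument; this is essentially the content of Lemma~\ref{strongly connected is block k}, and to be fully rigorous it needs the observation that $G$ transports not only the $E_k$-edges but also the inherited lower-level edges of $\gamgr{k}{\aut{A}}$, which strictly requires the lower-level foliages to be systems of imprimitivity as well (available in the intended induction of the subsequent proposition). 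To be fair, the paper's own proof silently uses the same facts --- it needs $\leaf{Y}$ and $\leaf{\kblock{C_e}{k}}$ to be blocks of one system and $\leaf{Z}\cdot\sigma^{-1}$ to be the foliage of an actual vertex $X$ --- so your write-up is, if anything, more explicit about where the imprimitivity hypothesis enters; your ``if'' direction and the bookkeeping with $\excl{w\rho}=\excl{w}\cdot\rho$, $\dupl{w\rho}=\dupl{w}\cdot\rho$ match the paper's (one-line) converse.
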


\begin{proof}
Since permutations respect the defect of any word and act well on excluded and duplicated sets, the converse is easy to see.

Now, if $Y \xrightarrow{w} Z \in E_{k+1}$, with $w \in W_{k+1}(\aut{A})$, then  $\excl{w} \subset \leaf{Y}$ and $\dupl{w} \cap \leaf{Z} \neq \emptyset$.
Let $w = u\, a \, \sigma$ with  $\sigma \in G$, this is, $\sigma$ is the longest permutation after the last appearance of the letter $a$ in $w$.
Since permutations do not increase the defect of a word, then $ua \in W_{k+1}(\aut{A})$ and $\excl{w \sigma^{-1}} = \excl{ua}$.
From the last affirmation we can conclude that $\excl{ua} \subseteq \leaf{Y}\cdot \sigma^{-1}.$

Since, by hypothesis, $\leaf{Y}$ is a \blim\ then also it is $\leaf{Y}\cdot \sigma^{-1}$.
Recall that $e \in \excl{ua}$ thus $\excl{ua} \subseteq \kblock{C_e}{k} = \leaf{Y}\cdot \sigma^{-1}$.
Using the same argument we can conclude that $\leaf{X} = \leaf{Z} \cdot \sigma^{-1}$.
\end{proof}

\begin{lemma}
If the foliages of the vertices in  $\gamgr{k}{\aut{A}}$ form a system of imprimitivity, then the foliage of $\kblock{C_e}{k+1}$ is a \blim\ of $G$ over $Q$.
\label{strongly connected is block k}
\end{lemma}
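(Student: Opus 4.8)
The strategy is to mimic the proof of Lemma~\ref{strongly connected is block 1}, but now working one level up, with the foliage map doing the bookkeeping. Assume the foliages of the vertices of $\gamgr{k}{\aut{A}}$ form a system of imprimitivity of $G$ over $Q$. Let $\rho \in G$ be arbitrary. The first step is to show that $\leaf{\kblock{C_e}{k+1}} \cdot \rho$ is again the foliage of some \scc\ of $\gamgr{k+1}{\aut{A}}$. Here I would use Lemma~\ref{edges gamma k}: an edge $Y \to Z$ in $E_{k+1}$ is, up to a permutation, an image of an edge $\kblock{C_e}{k} \to X$ leaving $\kblock{C_e}{k}$. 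Combined with the edges inherited from $\gamgr{k}{\aut{A}}$ (which by the inductive hypothesis and Lemma~\ref{strongly connected is block k} applied at level $k$ behave well under $G$), this should give that the whole \scc\ structure of $\gamgr{k+1}{\aut{A}}$ around $\kblock{C_e}{k+1}$ is $G$-equivariant at the level of foliages: if $p, q$ lie in $\leaf{\kblock{C_e}{k+1}}$ and there is a path in $\gamgr{k+1}{\aut{A}}$ from the vertex containing $p$ to the vertex containing $q$, then applying $\rho$ to every edge (via Lemma~\ref{edges gamma k}) produces a path between the vertices whose foliages contain $p\cdot\rho$ and $q\cdot\rho$. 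Doing this in both directions shows $\leaf{\kblock{C_e}{k+1}}\cdot\rho$ is (a union of foliages forming) a strongly connected piece.

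The second step is the block property. Having shown that $\leaf{\kblock{C_e}{k+1}}\cdot\rho$ is the foliage of a \scc\ for every $\rho$, suppose $\leaf{\kblock{C_e}{k+1}} \cap \leaf{\kblock{C_e}{k+1}}\cdot\rho \neq \emptyset$. Pick $x$ in the intersection, and pick arbitrary $y \in \leaf{\kblock{C_e}{k+1}}$, $z \in \leaf{\kblock{C_e}{k+1}}\cdot\rho$. Because both sets are foliages of strongly connected vertex sets in $\gamgr{k+1}{\aut{A}}$, there are paths $y \to x$, $x \to z$, $z \to x$, $x \to y$ threading through the relevant vertices; concatenating shows $y$ and $z$ lie in one \scc, and since that \scc\ is the one containing $e$ (as $x$'s \scc\ is, up to $\rho$, the one containing $e$), one concludes $\leaf{\kblock{C_e}{k+1}} = \leaf{\kblock{C_e}{k+1}}\cdot\rho$. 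Since this holds for all $\rho \in G$, $\leaf{\kblock{C_e}{k+1}}$ is a block of imprimitivity of $G$ over $Q$; it is non-trivial because we only reach level $k+1$ when the \scc\ is ``big enough'', i.e.\ $\vert \leaf{\kblock{C_e}{k+1}} \vert \geq k+1 \geq 2$, and it is proper exactly when $\gamgr{k+1}{\aut{A}}$ is not strongly connected (otherwise the statement is vacuous or $\leaf{\kblock{C_e}{k+1}} = Q$).

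The main obstacle I anticipate is the bookkeeping in the first step: Lemma~\ref{edges gamma k} only directly describes edges of $E_{k+1}$ that \emph{leave} $\kblock{C_e}{k}$, so to transport an arbitrary path inside the \scc\ of $e$ I need to reduce general edges $Y \to Z$ of $\gamgr{k+1}{\aut{A}}$ to translates of edges out of $\kblock{C_e}{k}$. This requires transitivity of $G$ on the system of imprimitivity (so that each $Y$ is $\leaf{\kblock{C_e}{k}}\cdot\sigma$ for some $\sigma$) together with care that the inherited edges from $\gamgr{k}{\aut{A}}$ are themselves $G$-equivariant — which should follow from the inductive hypothesis that the level-$k$ foliages form a system of imprimitivity and from how edges of $\gamgr{k}{\aut{A}}$ were built. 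A secondary subtlety is verifying that the vertex whose foliage contains $x\cdot\rho^{-1}$ really is $\kblock{C_e}{k+1}$ (not merely strongly connected to it), which is where one uses that $e$ is fixed in the sense that $e \in \leaf{\kblock{C_e}{k+1}}$ and $e$ has preimages under every translate, as in the opening paragraph of the proof of Lemma~\ref{strongly connected is block 1}.
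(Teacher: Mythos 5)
Your proposal is correct and follows essentially the same route as the paper: the paper's own proof is only a two-line sketch observing that $\leaf{\kblock{C_e}{k+1}}$ is a union of level-$k$ blocks and then invoking ``an argument similar to the proof of Lemma~\ref{strongly connected is block 1}'' (transport of paths by permutations via the $G$-equivariance captured in Lemma~\ref{edges gamma k}, images of \scc s are \scc s, then the overlap argument), which is exactly what you spell out. The bookkeeping you flag (equivariance of the inherited lower-level edges) is glossed over in the paper as well and is supplied by the surrounding induction, so your write-up is, if anything, more explicit than the original.
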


\begin{proof}
If each of the foliages of the vertices of  $\gamgr{k}{\aut{A}}$ forms a system of imprimitivity, then the foliage of $\kblock{C_e}{k+1}$ is just the union of \blsim.\\
We can use an argument similar to the one used in the proof of Lemma \ref{strongly connected is block 1} to prove that the image by any $\sigma \in G$ of the foliage of $\kblock{C_e}{k+1}$  is also a \scc\ and a \blim.$\qed$
\end{proof}

The previous lemmata form the proof by induction of:

\begin{proposition}
For any $k\geq 1$, the foliages of the vertices of each $\gamgr{k}{\aut{A}}$ form a system of imprimitivity.  
\end{proposition}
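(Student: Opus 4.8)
The plan is to prove the statement by induction on $k$ (understood as ranging over those $k$ for which $\gamgr{k}{\aut{A}}$ is defined), using Lemma~\ref{strongly connected is block 1} for the base and Lemmas~\ref{edges gamma k} and~\ref{strongly connected is block k} for the step, as the preceding discussion anticipates. For $k=1$ the vertices of $\gamgr{1}{\aut{A}}$ are the states of $Q$ and the foliage of a vertex is the singleton it forms, so the foliages are the partition of $Q$ into singletons, a (trivial) system of imprimitivity for any transitive $G$; alternatively one may take $k=2$ as the base, where the foliages of the vertices are the strongly connected components of $\gamgr{1}{\aut{A}}$ and $\kblock{C_e}{1}$ is a block of $G$ by Lemma~\ref{strongly connected is block 1}. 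Along the induction I would also carry the auxiliary fact that the action of $G$ on $Q_k$ induced by permuting the foliages — well defined once these form a $G$-invariant partition — is by graph automorphisms of $\gamgr{k}{\aut{A}}$; for $\gamgr{1}{\aut{A}}$ this is the observation, already used inside the proof of Lemma~\ref{strongly connected is block 1}, that a permutation $\rho$ sends a defect-$1$ word $w$ to a defect-$1$ word $w\rho$ with $\excl{w\rho}=\excl{w}\cdot\rho$ and $\dupl{w\rho}=\dupl{w}\cdot\rho$, hence maps $E_1$ onto $E_1$.

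For the inductive step, assume the foliages of the vertices of $\gamgr{k}{\aut{A}}$ form a system of imprimitivity $\mathcal{P}_k$ and that $G$ acts on $\gamgr{k}{\aut{A}}$ by automorphisms. If the construction halts at step $k$ there is nothing more to prove; otherwise form $\gamgr{k+1}{\aut{A}}$. First, the foliages of its vertices partition $Q$: each is a union of foliages of $\gamgr{k}{\aut{A}}$-vertices, hence a union of blocks of $\mathcal{P}_k$, and distinct strongly connected components being disjoint, these unions are disjoint and together exhaust $Q$. Next, by Lemma~\ref{strongly connected is block k} the set $\leaf{\kblock{C_e}{k+1}}$ is a block of $G$, so, $G$ being transitive, its translates $\leaf{\kblock{C_e}{k+1}}\cdot\sigma$ over $\sigma\in G$ form a partition of $Q$ into blocks. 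It then remains to identify these translates with the foliages of the vertices of $\gamgr{k+1}{\aut{A}}$.

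That identification I would carry out in two directions. On one hand, since $G$ acts on $\gamgr{k}{\aut{A}}$ by automorphisms — the step of the auxiliary invariant being precisely Lemma~\ref{edges gamma k} together with the easy converse stated there, applied to $E_{k+1}$, combined with the inductive clause for the older edges — each $\sigma\in G$ maps the component $\kblock{C_e}{k+1}$ onto another strongly connected component of $\gamgr{k}{\aut{A}}$ whose foliage is exactly $\leaf{\kblock{C_e}{k+1}}\cdot\sigma$; this is the same component-transport argument used in Lemmas~\ref{strongly connected is block 1} and~\ref{strongly connected is block k}. On the other hand, if $S$ is an arbitrary strongly connected component of $\gamgr{k}{\aut{A}}$, then $\leaf{S}$ is a union of blocks of $\mathcal{P}_k$ and meets one of the translates $\leaf{\kblock{C_e}{k+1}}\cdot\sigma$, which is also a union of blocks of $\mathcal{P}_k$; as distinct blocks of $\mathcal{P}_k$ are disjoint, $\leaf{S}$ and that translate share a block of $\mathcal{P}_k$, hence $S$ and the component with foliage $\leaf{\kblock{C_e}{k+1}}\cdot\sigma$ share a vertex of $Q_k$ and so coincide. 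Thus the foliages of the vertices of $\gamgr{k+1}{\aut{A}}$ are exactly the $G$-translates of the block $\leaf{\kblock{C_e}{k+1}}$; they form a system of imprimitivity, and the auxiliary invariant passes to $k+1$ as well, closing the induction.

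The main difficulty is not any single computation but the bookkeeping: it is not enough to quote ``$\leaf{\kblock{C_e}{k+1}}$ is a block'' from Lemma~\ref{strongly connected is block k}; one must also rule out stray strongly connected components whose foliage is not a $G$-translate of it, and this is exactly where the $G$-equivariance of the \emph{entire} edge set of $\gamgr{k}{\aut{A}}$ — accumulated through Lemmas~\ref{edges gamma 1} and~\ref{edges gamma k} — has to be threaded through the induction rather than invoked after the fact. Everything else is routine transport of paths and blocks under permutations.
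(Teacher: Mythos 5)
Your proof is correct and follows essentially the same route as the paper, which obtains the proposition by induction from Lemmas \ref{strongly connected is block 1}, \ref{edges gamma k} and \ref{strongly connected is block k}. You are in fact more explicit than the paper at the one delicate point --- identifying the foliage of \emph{every} strongly connected component of $\gamgr{k}{\aut{A}}$ with a $G$-translate of $\leaf{\kblock{C_e}{k+1}}$ via the induced action of $G$ by graph automorphisms --- a step the paper leaves implicit in its one-line appeal to the preceding lemmata.
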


Note that for any $k \geq 1$ if it happens that $\leaf{\kblock{C_e}{k}} = Q$ then $\gamgr{k}{\aut{A}}$ is strongly connected and $\aut{A}$ is \compreach. Now we will prove that if this is not the case for any $k$, then some \blim\ that contains $e$ is invariant by $a$.  We will use the following set:
\begin{multline*}
D_k(\aut{A}) := \nset{p \in Q \mid p \in \dupl w \text{ for some } w \in \Sigmaast \\  \text{ such that } \vert \excl w \vert \leq k \text{ and } e \in \excl{w} \subseteq \leaf{\kblock{C_e}{k-1}}}.
\end{multline*}
The set of states duplicated by words of defect less than $k$ such that their excluded set is contained in $C_e^{[k]}$.

\subsection{Intermezzo}
Before we continue, it is necessary to present some definitions and results related with the theory of permutation groups that are used in the rest of this section.
Let $Q$ be a finite set and $G \subseteq S_{Q}$ be a subgroup of permutations of $Q$.
For any non-empty subset $P \subset Q$ consider the set of permutations:
$$\St{G}{P} := \nset{\sigma \in G \mid P \cdot \sigma = P}.$$
This is, the set of permutations of $G$ that preserve $P$ set-wise.
It can be easily proved that $\St{G}{P}$ is a subgroup of $G$. Let us call it the \emph{stabilizer} of the subset $P$.

Now consider an arbitrary but fixed system of imprimitivity of $G$ over $Q$, call it $\mathfrak{B}$.
The following fact is well known and we will omit the proof.

\begin{proposition}
Let $G$ be a group of permutations of a finite set $Q$.
Suppose that $G$ is transitive and $\mathfrak{B}$ is a system of imprimitivity.
If $B,C \in \mathfrak{B}$ are two different blocks of imprimitivity then $\St{G}{B}$ and $\St{G}{C}$ are conjugate subgroups of $G$.
\label{prop:stabilizers conjugates}
\end{proposition}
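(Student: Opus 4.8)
The plan is to exhibit an explicit element of $G$ that conjugates one stabilizer onto the other, which is the natural way to prove a conjugacy statement of this shape.

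First I would record the standard fact that a transitive group acts transitively on any of its imprimitivity systems. Given $b \in B$ and $c \in C$, transitivity of $G$ yields $\sigma \in G$ with $b \cdot \sigma = c$. Since the images of a \blim\ under elements of $G$ are again blocks of the system $\mathfrak{B}$, the set $B \cdot \sigma$ is a block of $\mathfrak{B}$, and it contains $c$. As $C$ also contains $c$ and distinct blocks of $\mathfrak{B}$ are pairwise disjoint, we must have $B \cdot \sigma = C$.

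Second, with such a $\sigma$ fixed, I would check the identity $\St{G}{C} = \sigma^{-1}\,\St{G}{B}\,\sigma$ by a direct double inclusion, keeping in mind the right-action convention $q \cdot \rho\tau = (q\cdot\rho)\cdot\tau$. If $\rho \in \St{G}{B}$, then $C \cdot (\sigma^{-1}\rho\sigma) = (B\cdot\sigma)\cdot(\sigma^{-1}\rho\sigma) = B\cdot\rho\sigma = B\cdot\sigma = C$, so $\sigma^{-1}\rho\sigma \in \St{G}{C}$. Conversely, if $\tau \in \St{G}{C}$, set $\rho := \sigma\tau\sigma^{-1} \in G$; then $B\cdot\rho = (B\cdot\sigma)\cdot\tau\sigma^{-1} = C\cdot\tau\sigma^{-1} = C\cdot\sigma^{-1} = B$, so $\rho \in \St{G}{B}$ and $\tau = \sigma^{-1}\rho\sigma$. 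The two inclusions give the equality, and hence $\St{G}{B}$ and $\St{G}{C}$ are conjugate in $G$.

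I do not expect a genuine obstacle here: the only points that require a little care are invoking the defining property of $\mathfrak{B}$ — that its blocks partition $Q$, so the $G$-action on $\mathfrak{B}$ is well defined and transitive — in the first step, and keeping the composition order straight in the conjugation identity, since the permutations in this paper act on the right.
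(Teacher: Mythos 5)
Your argument is correct: the choice of $\sigma$ with $B \cdot \sigma = C$ (using transitivity plus the fact that distinct blocks of $\mathfrak{B}$ are disjoint) and the double-inclusion check of $\St{G}{C} = \sigma^{-1}\,\St{G}{B}\,\sigma$ are both sound, and you handle the right-action composition order correctly. The paper itself gives no proof of Proposition \ref{prop:stabilizers conjugates}, citing it as a well-known fact, and yours is exactly the standard argument one would supply in its place.
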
 

For a subgroup $H$ of a group $G$, the \emph{core} of $H$, denoted by $\cor{H}$, is the intersection of all the conjugates of $H$ in $G$, i.e.,
$$\cor{H} := \bigcap_{\sigma \in G} \sigma^{-1}H\sigma.$$
Note that this subgroup is normal for $G$. 

Resuming with the transitive group $G$ of permutations of $Q$, Proposition \ref{prop:stabilizers conjugates} tell us that for every system of imprimitivity $\mathfrak{B}$ of $Q$ all the stabilizers of the blocks are conjugate.
Hence, the following definition makes sense.

\begin{definition}
Let $G$ be a subgroup of permutations of $Q$ and $\mathfrak{B}$ a system of imprimitivity of $Q$.
The core of $\mathfrak{B}$, denoted by $\cor{\mathfrak{B}}$, is the intersection of all the stabilizers of the blocks in $\mathfrak{B}$. 

\label{def:core system of impr}
\end{definition}

In some occasions it is more convenient to work with \blsim, hence to talk about the core of a block of imprimitivity.
If $\mathfrak{B}$ is a system of imprimitivity and  $B \in \mathfrak{B}$ is a block, we denote $\cor{B}:= \cor{\mathfrak{B}}$.
For our purposes we look for the core of certain \blsim\ to act in a transitive way on said blocks. 
We can ensure this if said core acts transitively on at least one of the blocks.

\begin{proposition}
Let $G$ be a group of permutations of the finite set $Q$.
Suppose that $G$ is transitive and $\mathfrak{B}$ is a system of imprimitivity.
If $B \in \mathfrak{B}$ is a block and $\cor{\mathfrak{B}}$ acts transitively on $B$, then this core is also transitive on all the blocks of $\mathfrak{B}$.
\label{prop:transitive on one}
\end{proposition}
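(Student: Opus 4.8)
The plan is to exploit the two structural facts that make the core well-behaved: that $\cor{\mathfrak{B}}$ is a \emph{normal} subgroup of $G$, and that $G$ permutes the blocks of $\mathfrak{B}$ transitively. Together these let one transport transitivity on a single block to every other block by conjugation.

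First I would record two preliminary observations. (a) By Proposition \ref{prop:stabilizers conjugates} all block stabilizers $\St{G}{C}$, $C \in \mathfrak{B}$, are conjugate in $G$; conversely, for any $\sigma \in G$ one has $\sigma^{-1}\St{G}{B}\sigma = \St{G}{B\cdot\sigma}$, which is again the stabilizer of a block. Hence the family $\nset{\St{G}{C} \mid C \in \mathfrak{B}}$ is exactly the family of all $G$-conjugates of $\St{G}{B}$, so $\cor{\mathfrak{B}} = \bigcap_{C \in \mathfrak{B}} \St{G}{C} = \bigcap_{\sigma \in G} \sigma^{-1}\St{G}{B}\sigma = \cor{\St{G}{B}}$, which is normal in $G$. (b) By the very definition of $\cor{\mathfrak{B}}$ we have $\cor{\mathfrak{B}} \subseteq \St{G}{C}$ for every $C \in \mathfrak{B}$, so each element of $\cor{\mathfrak{B}}$ maps $C$ onto itself; in particular it makes sense to speak of the action of $\cor{\mathfrak{B}}$ on any block of $\mathfrak{B}$.

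Then I would fix an arbitrary block $C \in \mathfrak{B}$ and argue transitivity on it. Since $G$ is transitive on $Q$ and $\mathfrak{B}$ is a partition into blocks, $G$ acts transitively on $\mathfrak{B}$, so there is $\sigma \in G$ with $B \cdot \sigma = C$, equivalently $C \cdot \sigma^{-1} = B$. Given $x, y \in C$, the states $x\cdot\sigma^{-1}$ and $y\cdot\sigma^{-1}$ lie in $B$, so by hypothesis there is $\tau \in \cor{\mathfrak{B}}$ with $x\cdot\sigma^{-1}\tau = y\cdot\sigma^{-1}$, that is, $x \cdot (\sigma^{-1}\tau\sigma) = y$. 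By the normality from observation (a), $\sigma^{-1}\tau\sigma \in \cor{\mathfrak{B}}$, and by observation (b) it indeed acts on $C$. Hence $\cor{\mathfrak{B}}$ is transitive on $C$, and since $C$ was arbitrary the proposition follows.

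I do not anticipate a real obstacle here; the only point requiring a little care is observation (a) — checking that the block stabilizers are precisely the $G$-conjugates of a single one, so that the core of the system of imprimitivity coincides with the group-theoretic core of one stabilizer and is therefore normal in $G$. Once that identification is in place, the conjugation argument is entirely routine.
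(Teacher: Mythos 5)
Your proof is correct and follows essentially the same route as the paper: transport the two points of an arbitrary block $C$ into $B$ via a permutation realizing $B\cdot\sigma = C$, apply the hypothesis on $B$, and conjugate back, using normality of $\cor{\mathfrak{B}}$ in $G$ to stay inside the core. Your observation (a), making explicit that $\cor{\mathfrak{B}}$ coincides with the group-theoretic core of a single block stabilizer and is hence normal, is a point the paper leaves implicit, but it is the same argument.
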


\begin{proof}
Let $C\in \mathfrak{B}$ be a different block of $\mathfrak{B}$, besides let $p,q \in C$ be two different states.
We aim to prove that there is a permutation $\sigma \in \cor{\mathfrak{B}}$ such that $p \cdot \sigma = q$.
Being $G$ transitive, there is a permutation $\tau \in G$ such that $C \cdot \tau = B$.
Let $r,s \in B$ be such that $p\cdot \tau = r$ and $q \cdot \tau = s$.
By hypothesis, there is a permutation $\rho \in \cor{\mathfrak{B}}$ such that $r \cdot \rho = s$.
Thus
 $$p \cdot \tau\rho \tau^{-1} = q.$$
Since the core is normal in $G$ we can conclude that $\tau\rho \tau^{-1} \in \cor{\mathfrak{B}}$.
\end{proof}

\subsection{Non-reachability and invariance}

In this part we see that for some almost-group automata not being completely reachable implies there is at least one \blim\ invariant by the letter of defect 1.

Before the main proposition we present a technical lemma.
Since in the following lemma $k$ is arbitrary but fixed, $C_e^{[k]}$ will be referred just as $C_e$.

\begin{lemma}
Let $\aut{A} = \langle Q, \Sigma_0\cup \nset{a} \rangle$ be an almost-group automaton.
If in $\gamgr{k}{\aut{A}}$ there is an edge $C_e \rightarrow X$ and $\cor{\leaf{C_e}}$ is transitive for $C_e$, then for every state $q \in \leaf{X}$, there exists a word $v$ of defect $k$ such that $\excl{v} \subset \leaf{C_e}$ and $q \in \dupl{v}$.
\label{lemma:cover with dupl}
\end{lemma}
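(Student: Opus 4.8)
The plan is to take a word of defect $k$ that witnesses the edge $C_e\to X$ and post-compose it with a permutation lying in the core $\cor{\leaf{C_e}}$ that carries one already-duplicated state of $\leaf{X}$ onto the prescribed target $q$. The whole point is that an element of the core fixes every block of the relevant system of imprimitivity set-wise, so post-composing with it cannot destroy the containment $\excl{}\subseteq\leaf{C_e}$ and, being a permutation, cannot change the defect either.

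\textbf{Step 1: unpacking the edge.} By the definition of the edge set at level $k$, the edge $C_e\to X$ is realised by a word $w\in W_k(\aut{A})$ with $\excl{w}\subseteq\leaf{C_e}$ and $\dupl{w}\cap\leaf{X}\neq\emptyset$; fix a state $q_0\in\dupl{w}\cap\leaf{X}$. By the proposition asserting that for every $k\geq 1$ the foliages of the vertices of $\gamgr{k}{\aut{A}}$ form a system of imprimitivity of $G$ over $Q$ (proved by Lemmas \ref{strongly connected is block 1} and \ref{strongly connected is block k}), call that system $\mathfrak{B}$; then $\leaf{C_e}$ and $\leaf{X}$ are blocks of $\mathfrak{B}$, and $\cor{\leaf{C_e}}=\cor{\mathfrak{B}}$ by Definition \ref{def:core system of impr}.

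\textbf{Step 2: steering with the core.} Since by hypothesis $\cor{\mathfrak{B}}$ is transitive on the block $\leaf{C_e}$, Proposition \ref{prop:transitive on one} gives that $\cor{\mathfrak{B}}$ is transitive on every block of $\mathfrak{B}$, in particular on $\leaf{X}$. Hence, given an arbitrary $q\in\leaf{X}$, there is $\rho\in\cor{\mathfrak{B}}$ with $q_0\cdot\rho=q$. Set $v:=w\rho$. A permutation does not change the defect of a word, so $v\in W_k(\aut{A})$, and (as used already in the proof of Lemma \ref{edges gamma 1}) permutations act compatibly with excluded and duplicated sets: $\excl{v}=\excl{w}\cdot\rho$ and $\dupl{v}=\dupl{w}\cdot\rho$. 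Because $\rho\in\cor{\mathfrak{B}}=\bigcap_{B\in\mathfrak{B}}\St{G}{B}$, the permutation $\rho$ stabilizes $\leaf{C_e}$ set-wise, whence $\excl{v}=\excl{w}\cdot\rho\subseteq\leaf{C_e}\cdot\rho=\leaf{C_e}$; and $q=q_0\cdot\rho\in\dupl{w}\cdot\rho=\dupl{v}$. Thus $v$ has defect $k$, $\excl{v}\subseteq\leaf{C_e}$ and $q\in\dupl{v}$, which is exactly what is required.

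\textbf{Main obstacle.} The computations with $\excl{}$, $\dupl{}$ and defects under a permutation are entirely routine and have appeared before in the paper; the only genuine point of care is the bookkeeping at the start of Step 1 — making sure that an edge $C_e\to X$ present in $\gamgr{k}{\aut{A}}$ is indeed witnessed by a word of defect \emph{exactly} $k$ (i.e.\ it is an edge introduced at level $k$, not one merely inherited from an earlier graph), and, if strict inclusion $\excl{v}\subset\leaf{C_e}$ is wanted, noting that the block $\leaf{C_e}$ is strictly larger than the $k$-element set $\excl{w}$ since it is a non-trivial block containing $e$ together with the extra states forced by the strong connectivity of $C_e$. Beyond this I do not expect any real difficulty.
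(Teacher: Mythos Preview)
Your argument is exactly the paper's: take a word $w$ witnessing the edge, pick $p\in\dupl{w}\cap\leaf{X}$, use Proposition~\ref{prop:transitive on one} to find $\sigma\in\cor{\leaf{C_e}}$ with $p\cdot\sigma=q$, and observe that $w\sigma$ does the job because $\sigma$ stabilises $\leaf{C_e}$. The only addition is your ``Main obstacle'' paragraph, which correctly flags a bookkeeping point (defect exactly $k$ versus inherited edges of smaller defect, and strict versus non-strict inclusion) that the paper's own proof leaves implicit.
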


\begin{proof}
The edge $C_e \rightarrow X$ is produced by a word $w$ such that $\excl{w} \subset \leaf{C_e}$ and $\dupl{w} \cap \leaf{X} \neq \emptyset$.
Let $p \in \dupl{w} \cap \leaf{X}$ be arbitrary.
Since $\cor{C_e}$ is transitive, by Proposition \ref{prop:transitive on one} there is a permutation $\sigma \in \cor{C_e}$ such that $p\cdot \sigma =q$.
At the same time it is true that $C_e \cdot \sigma = C_e$, since the core is a subset of $\St{G}{C_e}$.
Therefore we have that $\excl{w\sigma} \subset \leaf{C_e}$ and $q \in \dupl{w\sigma}$.
As wanted.
\end{proof}

Using the Lemma \ref{edges gamma k} we also can conclude:
\begin{corollary}
If in $\gamgr{k}{\aut{A}}$ there is an edge $X \xrightarrow{w} Y$ and $\cor{\leaf{C_e}}$ is transitive for $C_e$.
Then for every state $q \in \leaf{Y}$, there exists a word $v$ of defect $k$ such that $\excl{v} \subseteq \leaf{X}$ and $q \in \dupl{v}$.
\label{lemma:cover with dupl general}  
\end{corollary}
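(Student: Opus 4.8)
The plan is to reduce the statement to Lemma \ref{lemma:cover with dupl} by transporting the edge $X \xrightarrow{w} Y$ back to an edge issuing from $C_e$ via a suitable permutation, applying that lemma there, and then transporting the word it produces forward again. First I would note that, by the preceding Proposition, the foliages of the vertices of $\gamgr{k}{\aut{A}}$ form a system of imprimitivity of $G$ over $Q$, so Lemma \ref{edges gamma k} applies to the edge $X \to Y$ and yields a permutation $\sigma \in G$ together with a vertex $X' \in Q_k$ such that $\leaf{X} = \leaf{C_e}\cdot \sigma$, $\leaf{X'}\cdot \sigma = \leaf{Y}$, and $C_e \to X'$ is an edge of $\gamgr{k}{\aut{A}}$.

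Next, fix a state $q \in \leaf{Y}$; then $q\cdot \sigma^{-1} \in \leaf{X'}$. Since $\cor{\leaf{C_e}}$ is transitive for $C_e$, Lemma \ref{lemma:cover with dupl} applies to the edge $C_e \to X'$ and produces a word $v'$ of defect $k$ with $\excl{v'} \subset \leaf{C_e}$ and $q\cdot \sigma^{-1} \in \dupl{v'}$. I would then set $v := v'\sigma$. Because right-multiplying a word by a permutation changes neither its defect nor the way it acts on excluded and duplicated sets, $v$ has defect $k$, $\excl{v} = \excl{v'}\cdot \sigma \subset \leaf{C_e}\cdot \sigma = \leaf{X}$, and $\dupl{v} = \dupl{v'}\cdot \sigma \ni (q\cdot \sigma^{-1})\cdot \sigma = q$, which is the desired conclusion.

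The argument is essentially a composition of the two earlier lemmas, so I do not expect a deep obstacle. The one point that needs care is the level bookkeeping between the edge sets $E_j$ and the graphs $\gamgr{k}{\aut{A}}$: one must check that the edge $C_e \to X'$ handed back by Lemma \ref{edges gamma k} sits at the level whose words have defect $k$, so that Lemma \ref{lemma:cover with dupl} indeed returns a word of exactly that defect. Everything else — preservation of defect, of $\excl{\cdot}$, and of $\dupl{\cdot}$ under right multiplication by a permutation, and the identity $\leaf{C_e}\cdot \sigma = \leaf{X}$ — is routine.
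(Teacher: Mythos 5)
Your proposal is correct and follows essentially the same route as the paper, which derives the corollary precisely by combining Lemma \ref{edges gamma k} (to transport the edge $X \to Y$ back to an edge issuing from $C_e$) with Lemma \ref{lemma:cover with dupl}, then conjugating the resulting word by the permutation. Your explicit handling of $v := v'\sigma$ and the remark on level bookkeeping only spell out what the paper leaves implicit.
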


With these two lemmas, we are ready for the main result of this part: 

\begin{theorem}
Let $\aut{A} = \langle Q, \Sigma_0\cup \nset{a} \rangle$ be an almost-group automaton.
Suppose $\gamgr{}{\aut{A}}$ is not strongly connected.
This means for some $k \geq 1$ it happens that $\gamgr{}{\aut{A}} = \gamgr{k}{\aut{A}}$; and  $\kblock{C_e}{k}  = \kblock{C_e}{j}$ for every $j \ge k$. 
Besides this, suppose that for every $\ell \leq k $ the cores $\cor{\kblock{C_e}{\ell}}$ are transitive on $\kblock{C_e}{\ell}$.
Then $\leaf{\kblock{C_e}{k}}$ is invariant by $a$.
\label{prop:a-invariance implies compreach}
\end{theorem}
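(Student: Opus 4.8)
I would argue by contradiction: suppose $\leaf{\kblock{C_e}{k}}$ is \emph{not} invariant by $a$, and derive that the construction of the Rystsov graph could not have terminated at step $k$ (contradicting $\Gamma(\aut A)=\gamgr{k}{\aut A}$), or that $\kblock{C_e}{k}$ would have to be strictly larger. Write $C_e := \kblock{C_e}{k}$ and $L := \leaf{C_e}$. Since $L$ is a \blim\ (by the proposition preceding the Intermezzo, the foliages form a system of imprimitivity) and $L \cdot a \not\subseteq L$, there is some state $t \in L$ with $t \cdot a \notin L$. The key point is that $e \in L$ and, because $\aut A$ is standardized, $e \in \coll a$, so $a$ already collapses a pair inside the block $L$; combined with a state leaving $L$ under $a$, this should let me manufacture a word of defect at most $k$ that creates an edge out of $C_e$ to a vertex whose foliage lies outside $L$ — contradicting that $L = \leaf{C_e}$ is the foliage of the $e$-component of the \emph{final} graph.

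The mechanics of producing that word are where the transitivity-of-cores hypothesis enters. By induction on $\ell \le k$, I would use Corollary \ref{lemma:cover with dupl general} (together with Lemma \ref{lemma:cover with dupl}): since $\cor{\kblock{C_e}{\ell}}$ is transitive on $\kblock{C_e}{\ell}$ for each $\ell$, every state of $\leaf{\kblock{C_e}{\ell}}$ can be realized as a duplicated state of a word of defect $\le \ell$ whose excluded set sits inside $\leaf{\kblock{C_e}{\ell-1}} \subseteq L$. Telescoping these up through $\ell = 1, \dots, k$, I expect to show that every state of $L$ is hit as a duplicate by some word $v$ of defect $\le k$ with $\excl v \subseteq L$; in particular one can arrange $\excl v$ to be an arbitrary chosen subset of $L$ of the right size containing $e$, and to "shift" duplicates around inside $L$ freely using core permutations (which fix $L$ setwise). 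Now take such a $v$ and post-compose with a permutation and then with $a$: using the state $t$ with $t\cdot a \notin L$, pick $\tau \in G$ with $\excl v \cdot \tau^{-1}$... more carefully, choose $v$ so that $\dupl v$ and $\coll v$ are positioned to let $va$ (or $v \sigma a$ for suitable $\sigma$) have excluded set still inside $L$ but a duplicated state outside $L$ — i.e. the defect does not increase (because $e$ is already excluded by $v$ and $a$ only re-collapses the $e$-pair) yet $\dupl{v\sigma a} \cap (Q\setminus L) \ne \emptyset$. That word witnesses an edge $C_e \to X$ in $\gamgr{k+1}{\aut A}$ with $\leaf X \not\subseteq L$, so $C_e$ is not its own \scc\ in $\gamgr{k}{\aut A}$ or the construction continues — the desired contradiction.

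**The main obstacle.** The delicate step is the bookkeeping that guarantees, when we append $a$ to the word $v$ we have built inside $L$, that the defect stays $\le k$ while a duplicate escapes $L$. We need $v$ to already exclude $e$ (so that $a$'s collapse of the $e$-pair is "free", not adding to the defect), and simultaneously we need some state $t\in L$ with $t\cdot a\notin L$ to actually appear in the image of $v$ and survive as (part of) $\dupl{va}$ rather than being itself excluded. Controlling both the excluded set and the relevant portion of the image of $v$ at once is exactly what the transitive-core hypothesis buys us — it lets us move things around inside $L$ by permutations that fix $L$ — but assembling the precise word, and checking that the non-invariance of $a$ on $L$ translates into a duplicate (not merely an excluded state) lying outside $L$, is the technical heart of the argument. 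I would also need the easy observation that $a$ restricted to $L$, being a map $L \to L \cup (\text{stuff outside})$ that is not into $L$, still has its $e$-collapse inside $L$, so the "escaping" behavior and the "collapsing" behavior happen at genuinely different states, which is what makes the defect count work out.
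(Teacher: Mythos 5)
There is a genuine gap exactly at what you yourself flag as the technical heart, and it is not a repairable bookkeeping detail within your scheme. Your contradiction needs a word $v$ of defect at most $k$ with $e\in\excl{v}\subseteq L:=\leaf{\kblock{C_e}{k}}$, a duplicated state of $v$ that $a$ sends outside $L$, \emph{and} $\excl{va}\subseteq L$. But for every $p\in\excl{v}\setminus\coll{a}$ the state $p\cdot a$ lies in $\excl{va}$, so the last requirement forces $(\excl{v}\setminus\coll{a})\cdot a\subseteq L$ --- and under your contradiction hypothesis $L\cdot a\not\subseteq L$ you have no control over this: Lemma \ref{lemma:cover with dupl} and Corollary \ref{lemma:cover with dupl general} only locate $\excl{v}$ inside the foliage of some vertex, they give no freedom to choose it, and the states of $L$ that escape under $a$ could be precisely the states of $\excl{v}$. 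Your claim that one can arrange $\excl{v}$ to be ``an arbitrary chosen subset of $L$'' is unsupported: core permutations stabilize every block and move $\excl{v}$ and $\dupl{v}$ simultaneously, which is far from placing a $k$-element excluded set at will. The paper sidesteps this by proving invariance \emph{directly}: a double induction (on the level $\ell$, and on the length of a path from $\kblock{C_e}{\ell}$ to a vertex $X_m$ of its \scc) establishes $\leaf{\kblock{C_e}{\ell}}\cdot a\subseteq\leaf{\kblock{C_e}{k}}$, and it is exactly the induction hypotheses ($\leaf{\kblock{C_e}{\ell}}\cdot a\subseteq\leaf{\kblock{C_e}{k}}$, resp.\ $\leaf{X_{m-1}}\cdot a\subseteq\leaf{\kblock{C_e}{k}}$) that guarantee $\excl{wa}\subseteq\leaf{\kblock{C_e}{k}}$, whence $p\cdot a\in\dupl{wa}\subseteq D_{k+1}(\aut{A})$ and the stabilization $\kblock{C_e}{k+1}=\kblock{C_e}{k}$ keeps $p\cdot a$ inside $L$. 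If you try to repair your argument by picking a ``first'' escaping state (minimal $\ell$, shortest path), you will find yourself reconstructing exactly this induction, so the contradiction wrapper buys nothing.

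A second, smaller gap: even granting such a word, it only yields an outgoing edge $\kblock{C_e}{k}\rightarrow X$ with $\leaf{X}\neq L$, which by itself contradicts neither $\gamgr{}{\aut{A}}=\gamgr{k}{\aut{A}}$ nor $\kblock{C_e}{k}=\kblock{C_e}{k+1}$: a \scc\ grows only if there is also a path back. A return path can indeed be manufactured --- translate the edge by a permutation carrying $\leaf{\kblock{C_e}{k}}$ onto $\leaf{X}$ and iterate, using Lemma \ref{edges gamma k} (the foliages form a system of imprimitivity) and the finite order of that permutation, as in the proof of Lemma \ref{strongly connected is block 1} --- but this step is absent from your plan and must be supplied before ``$C_e$ is not its own \scc'' can be concluded.
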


\begin{proof}
We will use a, structurally, similar proof of the same fact for binary automata presented in \cite{casas2022binary}.
Suppose that $\kblock{C_e}{k} = \kblock{C_e}{k+1}$.
By induction on $0\leq \ell \leq k$ we will prove that
$$\leaf{\kblock{C_e}{\ell}} \cdot a \subseteq \leaf{\kblock{C_e}{k}}.$$

For $\ell = 0$ take $\kblock{C_e}{0} = \nset{e}$ hence the proposition is true in this case.

Our \textit{first induction hypothesis} is that $\leaf{\kblock{C_e}{\ell}} \cdot a \subseteq \leaf{\kblock{C_e}{k}}.$
By the construction of $\gamgr{\ell+1}{\aut{A}}$, for any $p \in \leaf{\kblock{C_e}{\ell +1}}$ there is a $X_m \in Q_\ell$ such that $p \in \leaf{X_m}$ and there is a path:
$$
\kblock{C_e}{\ell} \rightarrow X_1 \rightarrow X_2 \rightarrow \dots \rightarrow X_m
$$
in $\gamgr{\ell}{\aut{A}}$.

Now, by induction on the length of the path (the number $m>1$) the idea is to prove that $\leaf{X_m}\cdot a \subseteq \leaf{\kblock{C_e}{k}}$.

If $m = 1$, since there is an edge $\kblock{C_e}{\ell} \rightarrow X_1$ we use Lemma \ref{lemma:cover with dupl} to ensure that for $p \in \leaf{X_1}$ there is a word $w \in W_{\ell}(\aut{A})$ such that $ \excl{w} \subseteq \leaf{\kblock{C_e}{\ell}}$ and $p \in \dupl{w} \cap \leaf{X_1}$.
The defect of $wa$ is at most $\ell +1 \leq k+1$ and by the \textit{first induction hypothesis} $\excl{wa} \subseteq \leaf{\kblock{C_e}{k}}$ and $$p\cdot a \in \dupl{wa} \subseteq D_{k+1}(\aut{A})\subseteq \kblock{C_e}{k+1} = \kblock{C_e}{k},$$ proving what we wanted.

Now suppose that $m>1$ and $\leaf{X_{m-1}}\cdot a \subseteq \leaf{\kblock{C_e}{k}}$, i.e., the \textit{second induction hypothesis}.
By the Corollary \ref{lemma:cover with dupl general} for $p\in \leaf{X_m}$ there is a word $w\in W_\ell(\aut{A})$ such that $\excl{w} \subseteq \leaf{X_{m-1}}$ and $p \in \dupl{w}$.
If we apply the same argument as before, but this time using the \textit{second induction hypothesis} we can conclude that 
$$p\cdot a \in \dupl{wa} \subseteq D_{k+1}(\aut{A})\subseteq \kblock{C_e}{k+1} = \kblock{C_e}{k},$$
again, as intended.

Since $\kblock{C_e}{\ell +1}$ is a \scc\ of $\gamgr{\ell}{\aut{A}}$, thus its foliage is the union of the respective foliages of its vertices.
We have proved that $\leaf{\kblock{C_e}{\ell +1}}\cdot a \subseteq \leaf{\kblock{C_e}{k+1}} =\leaf{\kblock{C_e}{k}}$.
\end{proof}

The previous theorem proves that for certain almost group automata  not being completely reachable is equivalent to having a non-trivial imprimitivity block that is invariant under the letter of defect 1. 

\section{Conclusion}

We considered automata with an alphabet such that there is exactly one letter of defect 1 and the other letters are permutations over the state set. We found a necessary and sufficient condition to decide whether these automata are completely reachable. We saw that if the group generated by the permutations is primitive, then the automaton is completely reachable. On the other case, if the group is transitive and it has non trivial \blsim\ the condition depends on the behaviour of the letter of defect one over certain \blsim. The author believes that the additional condition stated in Theorem \ref{prop:a-invariance implies compreach}, the one stating the transitivity of the cores on the blocks of imprimitivity, can be omitted but more work on this direction must be done.
In any case these results generalize what was presented in \cite{casas2022binary} where the alphabet was binary since the automata presented in that article are almost group and the group generated by the permutation letter is the cyclic one, which is abelian and thus the additional condition is given.
Once decided whether or not an automaton is completely reachable, the next interesting question is to find a bound to the shortest word required to reach subsets of size $1 \leq k < n$. In \cite{ferens2022completely} it is stated that this bound is at most $2 n(n - k)$; but we believe that due to the strict structure of the considered automata the bound can be improved. Nevertheless this problem is open by the moment.

\bibliographystyle{plain}
\bibliography{Bibliography_almost_group}

\end{document}